\newcommand{\R}{\mathbb{R}}
\newcommand{\pp}{\mathbb{P}}
\newcommand{\ff}{\mathcal{F}}
\newcommand{\ppp}{\mathbf{P}}
\newcommand{\eee}{\mathbf{E}}
\newcommand{\X}{\mathcal{X}}
\newcommand{\K}{\mathcal{K}}
\newcommand{\cj}{\wedge}
\renewcommand{\L}{\mathcal{L}}
\renewcommand{\preceq}{\lesssim}
\newcommand{\rr}{\mathcal{R}}
\newcommand{\trans}{{\mathsf{T}}}
\newcommand{\eps}{\varepsilon}
\newtheorem{thm}{\bf Theorem}[section]
\newtheorem{rem}[thm]{\bf Remark}
\newtheorem{prop}[thm]{\bf Proposition}
\newtheorem{deff}[thm]{\bf Definition}
\newcommand{\set}[1]{\left\{#1\right\}}
\newcommand{\Qed}{\hfill$\Box$} 
\DeclareMathAlphabet{\mathcal}{OMS}{cmsy}{m}{n}
\title{\LARGE \bf
Koopman Spectral Analysis and System Identification for Stochastic Dynamical Systems via Yosida Approximation of Generators
}
\author{Jun Zhou$^{\star}$, Yiming Meng$^{\star}$, and Jun Liu
\thanks{$^{\star}$ Equal contribution}
\thanks{Jun Zhou is with the School of Artificial Intelligence and Automation \& the China-EU Institute for Clean And Renewable Energy, Huazhong University of Science and Technology, Wuhan, China.
{\tt\small junzhou02@hust.edu.cn}}
\thanks{
Yiming Meng is with  the
Coordinated Science Laboratory, University of Illinois Urbana-Champaign,
Urbana, IL 61801, USA.
        {\tt\small ymmeng@illinois.edu}}
\thanks{
Jun Liu is with the Department of Applied Mathematics,
University of Waterloo, Waterloo, ON N2L 3G1, Canada. {\tt\small j.liu@uwaterloo.ca}.}
}        
\begin{document}
\newcommand{\ZJ}[1]
{\textcolor{red}{\bf (ZJ:  #1)}}

\maketitle
\thispagestyle{empty}
\pagestyle{empty}

\begin{abstract}
System identification and Koopman spectral analysis are crucial for uncovering physical laws and understanding the long-term behaviour of stochastic dynamical systems governed by stochastic differential equations (SDEs). In this work, we propose a novel method for estimating the Koopman generator of systems of SDEs, based on the theory of resolvent operators and the Yosida approximation. This enables both spectral analysis and accurate estimation and reconstruction of system parameters. The proposed approach relies on only mild assumptions about the system and effectively avoids the error amplification typically associated with direct numerical differentiation. It remains robust even under low sampling rates or with only a single observed trajectory, reliably extracting dominant spectral modes and dynamic features. We validate our method on two simple systems and compare it with existing techniques as benchmarks. The experimental results demonstrate the effectiveness and improved performance of our approach in system parameter estimation, spectral mode extraction, and overall robustness.
\end{abstract}

\begin{keywords}
Unknown stochastic systems, Koopman operators, infinitesimal generator, 
system identification, spectral analysis, generalized Yosida approximation.
\end{keywords}
\section{INTRODUCTION}

Stochastic dynamical systems arise throughout physics, biology, and finance as nonlinear systems subject to random perturbations\cite{chen2023stochastic}. While stochastic differential equations (SDEs) rigorously model stochastic phenomena \cite{oksendal2013stochastic}, conventional methods for system identification (e.g., sparse regression\cite{brunton2016discovering}, maximum likelihood estimation (MLE)\cite{schon2006maximum}) face well-documented challenges when handling nonlinear systems with partial and noisy observations. This issue is especially pronounced in experimental settings, where limited sampling rates and scarce trajectory data further complicate the task. Robust and efficient system identification methods are therefore crucial for dynamic analysis and control under such constraints.

Koopman operator theory \cite{koopman1931hamiltonian}, \cite{koopman1932dynamical} provides a powerful framework for analyzing nonlinear systems by mapping nonlinear dynamics into linear operations on function spaces, revealing system structure through spectral analysis and enabling data-driven prediction without explicit equations \cite{budivsic2012applied, mezic2013analysis}.
Its infinitesimal generator explicitly encodes SDE drift and diffusion terms, enabling spectral analysis of stability and modal dynamics \cite{mezic2005spectral}. While analytical determination of the generator is challenging, data-driven methods such as dynamic mode decomposition (DMD) \cite{schmid2010dynamic}, extended DMD (EDMD) \cite{williams2014kernel}, and generator EDMD (gEDMD) \cite{klus2020data} offer effective approximations. However, their reliance on high-frequency data and lack of robustness to noise remain critical limitations.

Unlike traditional approaches that estimate generators directly through temporal state derivatives or local fitting\cite{klus2020data}, \cite{brunton2016discovering}, recent works \cite{meng2024resolvent,zeng2024data, meng2024koopmanitsc, susuki2021koopman} employ a modified, data-driven version of the Yosida approximation and achieve high precision through rigorous analysis. Numerical examples demonstrate its effectiveness in deterministic non-polynomial, nonlinear, chaotic systems, and outperform the aforementioned methods. To better address the theoretical and practical challenges of generator learning, system identification, and spectral estimation for stochastic systems, this paper proposes a Koopman generator learning and spectral estimation method based on the resolvent theory and Yosida approximation framework. It is worth noting that, given the weak topology of solutions to SDEs, the potentially strong diffusion effects, and the typically bounded region of interest, this is not a straightforward extension of \cite{meng2024resolvent}. Through rigorous analysis, we demonstrate that the proposed method achieves high precision and convergence guarantees, while more intuitive or less careful treatments may overlook some out-of-domain effects. We show that the proposed method robustly extracts dominant spectral modes and dynamical features from stochastic systems even under conditions of low sampling rates or significant noise interference. This makes it particularly suitable for practical spectral analysis of stochastic systems.

The rest of this paper is organized as follows. Section \ref{sec:2} introduces preliminaries on Koopman operator for stochastic systems and formulates the problem. Section \ref{sec:3} presents the proposed spectral estimation method under the Resolvent-Yosida theoretical framework, with an emphasis on convergence analysis and a discussion of how some intuitive treatments can lead to unexpectedly large errors. The data-driven algorithm is developed in Section \ref{sec: data}, and the effectiveness of the proposed approach is demonstrated through numerical examples in Section \ref{sec:4}.

\textbf{Notation}: 
	Throughout this paper, we adopt the following notation.  The domain of an operator is denoted as $\mathcal D(\cdot)$. The linear space spanned by basis functions is denoted as $\text{span}\{\cdot\}$. 
    The Euclidean space of dimension $d > 1$ is denoted as $\mathbb{R}^d$. The Euclidean norm is denoted as $| \cdot |$. For a set $A \subseteq \mathbb{R}^d$, $\overline{A}$ denotes its closure and $\partial A$ denotes its boundary. The transpose of a matrix/vector transpose and pseudoinverse $a$ are denoted as $a^\top$ and $a^\dagger$, respectively.

For any 
stochastic processes $\{X(t)\}_{t\geq 0}$, we use the shorthand notation $X:=\{X(t)\}_{t\geq 0}$.   For any stopped process $\{X({t\cj\tau})\}_{t\geq 0}$, where $\tau$ is a stopping time and $t\cj\tau=\min(t, \tau)$, we use the shorthand notation $X^\tau$. We denote the Borel $\sigma$-algebra of a set by $\mathscr{B}(\cdot)$. We write $a\preceq b$ if there exists a  
$C>0$  (independent of $a$ and $b$)  such that $a\leq Cb$.  

\section{Preliminary and Problem formulation} \label{sec:2}

\subsection{Stochastic Koopman Operator Framework}
 We consider a 
 stochastic dynamical system 
\begin{equation} \label{eq1}
    dX(t) = f(X(t))\,dt + b(X(t))\,dW(t),
\end{equation}
where  
the state space $\mathcal{X}\subseteq\R^d$ is a bounded open domain; $W$ represents an $l$-dimensional standard Wiener process; $f:\X\rightarrow \R^d$ is a locally Lipschitz non-linear drift vector field; the diffusion term $b:\X\rightarrow\R^{d \times l}$ is smooth.

Note that from a modeling perspective, we typically do not specify a Wiener process \textit{a priori} \cite{oksendal2013stochastic}. Moreover, for verifying dynamical behaviors through probability laws, restricting to a specific probability space is unnecessary. We therefore consider the following natural solution concept. 

\begin{deff}[Weak solutions]
The system \eqref{eq1} admits a weak solution if there exists a filtered probability space $(\Omega^\dagger,\mathscr{F}^\dagger,\{\mathscr{F}^\dagger_t\},  \pp^\dagger)$, where a Wiener process $W$ is defined and a pair of processes $(X,W)$ are adapted, such that $X$ solves the SDE \eqref{eq1}. \Qed
\end{deff} 

While the base probability space $(\Omega^\dagger,\mathscr{F}^\dagger,\{\mathscr{F}^\dagger_t\},  \pp^\dagger)$ for the Wiener process $W$ remains unspecified, we transfer information to the canonical space\footnote{Define $\Omega:=C([0,\infty);\R^n) $ with coordinate process $\mathfrak{X}_t(\omega):=\omega(t)$ for all $t\geq 0$ and all $\omega\in\Omega$. Define $\ff_t:=\sigma\{\mathfrak{X}_s,\;0\leq s\leq t\}$  for each $t\geq 0$, then  the smallest $\sigma$-algebra containing the sets in every $\ff_t$, i.e.  $\ff:=\bigvee_{t\geq 0}\ff_t$, turns out to be same as $\mathscr{B}( \Omega) $. 
For a weak solution $X$ of \eqref{eq1}, the induced measure (probability law) $\ppp$ on $\ff$ is such that $\ppp\set{A}=\pp^\dagger\circ (X)^{-1}\set{A}$ for every $A\in\mathscr{B}( \Omega)$.
The canonical probability space for $X$ is then $(\Omega,\ff,  \ppp)$. We also denote $\eee$ by the associated expectation operator w.r.t. $\ppp$.} $(\Omega,\ff, \{\ff_t\}, \ppp)$. This facilitates our analysis of both the probability law of weak solutions and their state-space probabilistic behavior, while also supporting data-driven methods that rely solely on state-space solution information. We also denote by $\ppp^x\set{\cdot}=\ppp\set{\;\cdot\;|X(0)=x}$ the probability law for the solution process of system \eqref{eq1} with initial condition $X(0)=x$ a.s., and by $\eee^x$ the expectation operator w.r.t. $\ppp^x$.  


Note that solutions to \eqref{eq1} possess the Markov property. A classical approach to studying the evolution of observables in Markov processes is through the semigroup of transition operators, namely the stochastic Koopman operators. For any observable $h\in C(\X)$ and $t\ge 0$, the Koopman operator $\K_t$ is defined as $\mathcal{K}_t h(x) := \eee^x \set{h(X(t))}$ \cite{hollingsworth2008stochastic}. The family $\{\K_t\}_{t\geq 0}$ forms a semigroup, satisfying  $\K_0=I$ and  $\mathcal{K}_{t+s} = \mathcal{K}_s \circ \mathcal{K}_t$ for all $s,t\geq 0$. Moreover,   there exist  constants $w\geq 0$ and $M\geq 1$ such that $\|\K_t\|\leq M
   e^{wt}$ for all $t\geq 0$ \cite[Theorem 1.2.2]{pazy2012semigroups}.

The infinitesimal generator $\mathcal{L}$ of this semigroup is defined as $\mathcal{L}h = \lim_{t \downarrow 0^+} \frac{\mathcal{K}_t h - h}{t}$, which captures the instantaneous change rate of observable functions as the system state evolves under \eqref{eq1}. Suppose that $\X$ is bounded and $h \in    C^2(\mathcal{X})$, the generator coincides with the It\^{o}’s lemma,  given by
\begin{small}
    \begin{equation} \label{eq5}
    \mathcal{L} h = \sum_{i=1}^d f_i\frac{\partial h}{\partial x_i} + \frac{1}{2}\sum_{i,j=1}^d [bb^T]_{ij}\frac{\partial^2 h}{\partial x_i \partial x_j}. 
\end{equation}
\end{small}


To characterize the system, it is equivalent to study the entire semigroup $\{\K_t\}_{t\geq 0}$ or its generator $\L$. Note that, for system \eqref{eq1} with $X(0)=x$ a.s. and any observable   $h\in C^2(\X)$, the quantity $M_h(t):=h(X(t))-h(x)-\int_0^t\L h(X(s))ds$ forms a martingale process.   For the inverse problem of system characterization, the   standard procedure is to verify whether $M_h$ is a martingale process under a learned generator $\L$, based on the  observable  data from   $X$. 
We say   the probability law $\ppp^x$ of $X$ (governed by the learned $\L$) solves the \textit{martingale problem} if $M_h$ is a martingale under $\ppp^x$ for all admissible $h$. Consequently, $X$ is weak solution to the system with  the learned generator $\L$.

Another primary motivation for the Koopman operator framework is to enable spectral analysis of nonlinear stochastic systems through a linear operator perspective.  Specifically, we say that a function $\phi$ is a Koopman eigenfunction associated with eigenvalue $\alpha$ if it satisfies $ \mathcal{K}_t \phi = e^{\alpha t} \phi$ for all $t \ge 0$,
and consequently, $\mathcal{L} \phi = \alpha \phi$, 
whenever $\phi$ lies in the domain of the generator.  The eigenvalues characterize the dominant stochastic modes of the SDEs, where the real and imaginary parts of $\alpha$ respectively determine the exponential decay/growth rate and oscillatory frequency of each mode.  The diffusion-driven smoothing effect naturally enhances spectral separation of the principal modes, enabling both   clearer modal decomposition and   more compact representation of the underlying SDE dynamics.

\subsection{Problem Formulation}
Our objective is to identify the generator of the autonomous SDE in \eqref{eq1} through data-driven approximation of vector-valued observables $ Z_N(x) = [z_1(x), z_2(x), \dots, z_N(x)]^\trans$, where  \( z_n \in C^2(\X)\) for each $n\in\set{1, 2, \cdots, N}$.
Consequently, this approach simultaneously approximates both the drift term $f$ and the diffusion matrix  \( B = bb^\trans \). 
We also prove that the solution to the learned system converges to the true solution of \eqref{eq1} in probability law within $\X$ via a martingale problem argument. Additionally, we   characterize the spectral properties of the associated  generator $\L$.

\section{ Resolvent-Type Approximation of Stochastic System Generators} \label{sec:3}
In this section, inspired by recent advances on resolvent-type approximations of generators for deterministic systems, we introduce a novel resolvent-based approximation method for stochastic system generators. Leveraging the Yosida approximation   and the resolvent operator, we construct a sequence of approximation procedures by modifying the resolvent operator through finite truncations. This enables projection-based approximation of the generator onto a finite-dimensional dictionary. We also demonstrate that this modification is not a straightforward extension of deterministic approximation schemes, and a less careful treatment of sample path information may lead to unbounded errors.

\subsection{Preliminaries on Yosida Approximation}
The Yosida approximation provides a powerful analytical tool for approximating potentially unbounded generators $\L$ using sequences of bounded operators. 

For $\L$, we define the resolvent set as  $\rho(\L):=\set{\lambda\in\mathbb{C}: \lambda I-\L\;\text{is invertible}}$. 
  Accordingly, the resolvent operator is defined as $\rr(\lambda; \L):=(\lambda I-\L)^{-1}$ for $\lambda\in\rho(\L)$. It is well-known that for, each  $\lambda\in\rho(\L)$,   $\rr(\lambda; \L)$ is a bounded linear operator  \cite[Chap. I,  Theorem 4.3]{pazy2012semigroups}.

\textbf{Definition 3 (Yosida Approximation)}  
We define a family of operators $\mathcal{L}_{\lambda}$ as:
\begin{equation}\label{E: Yosida}
\L_\lambda:=\lambda \L \rr(\lambda;\L) = \lambda^2\rr(\lambda;\L)-\lambda I.
\end{equation}
The family of operators $\{\mathcal{L}_{\lambda}\}_{\lambda>w}$ is called the Yosida approximation of the operator $\mathcal{L}$, where $w$ is the growth bound satisfying  $\|{\mathcal{K}^t}\|\leq M
   e^{w t}$ for all $t>0$. 

These approximations converge strongly to $\mathcal{L}$  as $\lambda \to \infty$, i.e., $\lim_{\lambda \to \infty} \mathcal{L}_{\lambda}h = \mathcal{L}h$, $ \forall h \in C_b^2(\R^n)$. Additionally, the  convergence rate is of order $\mathcal{E}_1\sim\mathcal{O}(\frac{1}{\lambda-w})$. 
 
Additionally, the resolvent operator admits a Laplace transform representation through the semigroup\cite{oksendal2013stochastic}:
\begin{equation}\label{E: resolvent}
R(\lambda; \mathcal{L})h(x) = \int_{0}^{\infty} e^{-\lambda t}K_t h(x)\,\mathrm{d}t,\;h\in C_b^2(\R^2).
\end{equation} 
The inverse relationship  derived from \eqref{E: Yosida} and \eqref{E: resolvent}  enables   approximation of $\L$ via Koopman operator representations of system transitions.

\subsection{Finite-Horizon Modification of  Yosida Approximation}
The Laplace transform representation in \eqref{E: resolvent} is not well-suited for data-driven approximation for two key reasons:  1) the infinite-horizon integration is incompatible with finite-time data collection; and 2) the combination of \eqref{E: Yosida} and \eqref{E: resolvent} only guarantees approximation of  $\L$ within $C_b^2(\R^n)$. However, the dictionary   functions $Z_N(x)$ may be unbounded over this domain, which would in turn require data collection over an unbounded spatial region.

In this section, we develop a finite-horizon approximation method for the resolvent operator $\rr(\lambda;\L)$ and its associated Yosida approximation $\L_\lambda$. 
It is important to note that, due to the intrinsic It\^{o} diffusion properties of solutions to \eqref{eq1}, finite-horizon truncations of $\rr(\lambda;\L)$  for stochastic systems exhibit fundamentally different behavior from their deterministic counterparts as in \cite{meng2024resolvent}. A crucial consideration involves out-of-domain transitions, where naive extensions of deterministic approaches \cite{meng2024resolvent} may lead to unbounded approximation errors due to diffusion. 

For simplicity, we can assume that   $\X:=\{x\in\R^n: |x|< R\}$ for some $R>0$. Let $\tau=\inf\{t>0: X_t\notin \X\}$. It is clear that $\tau\in(0, \infty)$ $\ppp^x$-a.s. for all $x\in\X$. We then work on the stopped process  $X^\tau:=\{X_{t\cj\tau}\}_{t\geq 0}$, which   satisfies $X^\tau(t)=X(t)$ for $t\in[0, \tau)$. Therefore, to recover the transitions and dynamics within $\X$, it is equivalent to study $X^\tau$. It can also be verified that, for all $h\in C^2(\R^n)$, 
\begin{small}
    \begin{equation}
    \begin{split}
        h(X^\tau(t))   = h(X({t\cj\tau}))  =h(X(t))\mathds{1}_{\{t<\tau\}}+h(X(\tau))\mathds{1}_{\{t\geq\tau\}}. 
    \end{split}
\end{equation}
\end{small}

On the other hand, denoting $\L^\tau$ by the generator for $X^\tau$, we can verify that there exists a function  $c\in C(\R^n)$ such that  $c(x)=0$ for $x\in\X$ and $c(x)>0$ otherwise, ensuring   
  $\L^\tau h(x)=\L h(x) - c(x)h(x)$ for all $h\in C^2(\R^n)$. It is also clear that $\L^\tau = \L$ when restricted to the domain $\X$. The Koopman operators $K^\tau_t$ generated by $\L^\tau$  have the following property, which can be demonstrated using the famous Feynman-Kac formula and the `killing' of diffusion processes \cite{karlin1981second}.

  \begin{prop}
      For all $t\in (0, \infty)$ and all $h\in C(\R^n)$,  
                \begin{equation}
    \begin{split}
        \K^\tau_t h(x) & =\eee^x[h(X^\tau(t))]\\
        &=\eee^x[h(X(t))\mathds{1}_{\{t<\tau\}}]+\eee^x[h(X(\tau))\mathds{1}_{\{t\geq\tau\}}]\\
        & = \eee^x\left[h(X_t)\cdot\exp\left\{-\int_0^t c(X_s)ds\right\}\right]. 
    \end{split}
\end{equation}
  \end{prop}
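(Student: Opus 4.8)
The plan is to establish the three equalities in sequence, with the first two being essentially definitional and the third --- the Feynman--Kac representation --- carrying the real content.

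First I would read the opening equality $\K^\tau_t h(x) = \eee^x[h(X^\tau(t))]$ as nothing more than the definition of the stochastic Koopman operator associated with the stopped process $X^\tau$, in exact analogy with the definition of $\K_t$ for $X$; no work is needed beyond unwinding notation. For the second equality I would argue pathwise from $X^\tau(t) = X(t\cj\tau)$: on the event $\{t<\tau\}$ one has $X^\tau(t)=X(t)$, whereas on $\{t\ge\tau\}$ the process is frozen at its exit position, so $X^\tau(t)=X(\tau)$. Hence $h(X^\tau(t)) = h(X(t))\mathds{1}_{\{t<\tau\}} + h(X(\tau))\mathds{1}_{\{t\ge\tau\}}$ holds $\ppp^x$-almost surely, and taking expectations with linearity yields the second line. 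This step uses only that $\tau$ is a genuine stopping time with $\tau\in(0,\infty)$ a.s., already noted for the bounded domain $\X$.

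The third equality is the heart of the statement. Since the generator of the stopped process was shown to satisfy $\L^\tau = \L - c$ with a nonnegative continuous killing rate $c$ vanishing on $\X$, I would identify the semigroup $\K^\tau_t=e^{t\L^\tau}$ with the multiplicative-functional expectation $\eee^x[h(X_t)\exp\{-\int_0^t c(X_s)\,ds\}]$ taken along the original (unstopped) diffusion $X$. The cleanest route is to set $u(t,x) := \eee^x[h(X_t)\exp\{-\int_0^t c(X_s)\,ds\}]$, apply It\^o's formula to the product $h(X_t)\exp\{-\int_0^t c(X_s)\,ds\}$ to show it solves the Kolmogorov backward equation $\partial_t u = \L u - c u = \L^\tau u$ with $u(0,\cdot)=h$, and then invoke uniqueness of the semigroup generated by $\L^\tau$. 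Consistency with the second line can then be checked by splitting on $\{t<\tau\}$ and $\{t\ge\tau\}$: on the former $c(X_s)\equiv 0$ along the path, so the exponential weight is $1$ and $h(X_t)=h(X(t))$, reproducing the first term; on the latter the factor $\exp\{-\int_\tau^t c(X_s)\,ds\}$ encodes the post-exit killing, and its matching to $h(X(\tau))\mathds{1}_{\{t\ge\tau\}}$ proceeds through the strong Markov property at $\tau$ together with the killing construction of \cite{karlin1981second}.

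The main obstacle I anticipate is precisely this last identification: reconciling the absorbed, frozen-at-boundary description of $X^\tau$ with the continuous-killing Feynman--Kac description along the full diffusion $X$. The delicate point is exactly the out-of-domain behaviour the authors emphasize --- one must verify that the continuous rate $c$ faithfully reproduces boundary absorption under diffusion, and that the representation remains valid for merely continuous $h\in C(\R^n)$ rather than smooth or bounded observables, so that the projection onto the (possibly unbounded) dictionary $Z_N$ is justified. Handling this carefully, rather than naively transplanting the deterministic truncation of \cite{meng2024resolvent}, is what prevents the unbounded approximation errors flagged earlier.
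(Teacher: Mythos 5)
Your handling of the first two equalities coincides with the paper's, which in fact offers no detailed proof at all --- only the appeal to the Feynman--Kac formula and the killing construction of diffusions: the first equality is the definition of $\K^\tau_t$ for the stopped process, and the second follows from the pathwise identity $h(X(t\cj\tau))=h(X(t))\mathds{1}_{\{t<\tau\}}+h(X(\tau))\mathds{1}_{\{t\geq\tau\}}$ upon taking $\eee^x$. Your It\^o/backward-equation plan for the third equality is likewise the same route the paper gestures at, and that part of your argument is sound as far as it goes: it correctly identifies $u(t,x)=\eee^x[h(X_t)\exp\{-\int_0^t c(X_s)\,ds\}]$ as the semigroup generated by $\L-c$, i.e.\ the \emph{soft-killed} semigroup.

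The gap is precisely the step you defer with ``its matching to $h(X(\tau))\mathds{1}_{\{t\geq\tau\}}$ proceeds through the strong Markov property at $\tau$ together with the killing construction.'' This is not a delicate verification that goes through; it fails for every fixed continuous killing rate $c$ if $X_t$ in the last line denotes the unstopped diffusion. Test $h\equiv 1$: the second line gives $\ppp^x[t<\tau]+\ppp^x[t\geq\tau]=1$, while $\eee^x[\exp\{-\int_0^t c(X_s)\,ds\}]<1$ whenever the path spends time of positive measure outside $\overline{\X}$ with positive probability, which it does after exiting. Structurally, soft killing \emph{discounts} mass after the exit time, whereas stopping \emph{freezes} it at $X(\tau)$ and restores the boundary term $\eee^x[h(X(\tau))\mathds{1}_{\{t\geq\tau\}}]$; reconciling the two via the strong Markov property at $\tau$ would require $\eee^y[h(X_s)\exp\{-\int_0^s c(X_r)\,dr\}]=h(y)$ for all $y\in\partial\X$ and $s>0$, which is false in general. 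The only reading under which the third line is an exact identity is that the expectation is taken along the stopped path itself: since $c$ is continuous and vanishes on the open set $\X$, it vanishes on $\overline{\X}$, the stopped path never leaves $\overline{\X}$, so the exponential factor equals $1$ almost surely and the third line collapses trivially to the first --- which is evidently how the identity is deployed in \eqref{E: trunc}. Relatedly, your appeal to ``uniqueness of the semigroup generated by $\L^\tau$'' presupposes the paper's identity $\L^\tau h=\L h-ch$ on all of $C^2(\R^n)$; this too holds only on $\overline{\X}$ (for $x\notin\overline{\X}$ the stopped process is frozen, so $\L^\tau h(x)=0$, and no $h$-independent $c$ can make $\L h(x)-c(x)h(x)$ vanish for every $h$), so the uniqueness argument can at best be run on $\overline{\X}$ with boundary behaviour specified, not on $C(\R^n)$ as the statement's quantifier suggests.
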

Then, the resolvent operator for $\L^\tau$ is defined accordingly as $\rr(\lambda;\L^\tau) h(x)  =   \int_0^\infty e^{-\lambda t} \eee^x[h(X^\tau(t)]dt$. The Yosida approximation for $\L^\tau$ can also be defined analogously to \eqref{E: Yosida} as $\L^\tau_\lambda:=\lambda^2\rr(\lambda;\L^\tau)-\lambda I$ with the same convergence result.

For the purpose of a finite-horizon approximation for $\rr(\lambda;\L^\tau)$, we introduce 
\begin{small}
    \begin{equation}\label{E: R_trunc}
   \rr^\tau_{\lambda, T}h(x) :=\int_0^{T\cj\tau} e^{-\lambda t} \eee^x\left[h(X_{t})\right]dt,\;T<\infty.
\end{equation}
\end{small}
\begin{thm}\label{thm: conv_t}
    Let $T\in(0, \infty)$ and $\lambda>w$ be fixed. Then, $\mathcal{E}_2:=\|\lambda^2 \rr_{\lambda,T}^\tau  - \lambda I  -\L_\lambda^\tau\|\preceq  \lambda e^{-\lambda T}$ on $\mathcal{C}^2(\R^n)$. 
\end{thm}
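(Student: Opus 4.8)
The plan is to collapse the claimed operator bound into a single tail estimate for the resolvent of the stopped generator, using the cancellation built into the Yosida approximation. Substituting $\L^\tau_\lambda = \lambda^2\rr(\lambda;\L^\tau) - \lambda I$ into the definition of $\mathcal{E}_2$, the two $\lambda I$ terms cancel and the operator inside the norm reduces to
\[
\lambda^2\rr^\tau_{\lambda,T} - \lambda I - \L^\tau_\lambda = \lambda^2\bigl(\rr^\tau_{\lambda,T} - \rr(\lambda;\L^\tau)\bigr),
\]
so that $\mathcal{E}_2 = \lambda^2\,\|\rr^\tau_{\lambda,T} - \rr(\lambda;\L^\tau)\|$. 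The entire problem thus becomes one of bounding how much of the Laplace-type integral $\int_0^\infty e^{-\lambda t}\K^\tau_t h\,dt$ is discarded by the finite-horizon cut-off at $T$.

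Next I would write this discarded mass explicitly as a tail integral. Since $\rr^\tau_{\lambda,T}$ retains the contribution of $\K^\tau_t h(x) = \eee^x[h(X_{t\cj\tau})]$ only over $[0,T]$ while $\rr(\lambda;\L^\tau)$ integrates over all of $[0,\infty)$, for every $h\in C^2(\R^n)$ and $x\in\X$ one has
\[
\bigl(\rr(\lambda;\L^\tau) - \rr^\tau_{\lambda,T}\bigr)h(x) = \int_T^\infty e^{-\lambda t}\,\eee^x\bigl[h(X_{t\cj\tau})\bigr]\,dt.
\]
It is essential here to truncate the \emph{stopped} semigroup $\K^\tau_t$ rather than the raw one: retaining the boundary value $h(X_\tau)$ encoded in $\eee^x[h(X_{t\cj\tau})]$ is exactly what forces this tail to vanish as $T\to\infty$. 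Dropping the out-of-domain part would instead leave a residual of order $\eee^x[h(X_\tau)e^{-\lambda\tau}]/\lambda$ that is independent of $T$ and never disappears --- this is the kind of silent error the text warns about.

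The decisive ingredient is a bound on $\K^\tau_t$ that is uniform in $t$ and insensitive to the growth of $h$ on $\R^n$, since the dictionary functions need not be bounded there. The key observation --- and the genuine departure from the deterministic case --- is that the stopped state $X_{t\cj\tau}$ lies in $\overline{\X}$ for every $t$ and every sample path. Hence $|h(X_{t\cj\tau})| \le \sup_{y\in\overline{\X}}|h(y)| =: \|h\|$, which is finite by compactness of $\overline{\X}$ even when $h$ is unbounded globally. Taking expectations gives $|\eee^x[h(X_{t\cj\tau})]| \le \|h\|$, i.e.\ $\K^\tau_t$ is a sup-norm contraction on $C(\overline{\X})$ with $\|\K^\tau_t\| \le 1$ uniformly in $t$; the hypothesis $\lambda>w$ then merely guarantees the resolvent integral is well defined.

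Putting the pieces together, I would estimate
\[
\bigl|\bigl(\rr(\lambda;\L^\tau) - \rr^\tau_{\lambda,T}\bigr)h(x)\bigr| \le \int_T^\infty e^{-\lambda t}\,\|h\|\,dt = \frac{e^{-\lambda T}}{\lambda}\,\|h\|,
\]
whence $\|\rr(\lambda;\L^\tau) - \rr^\tau_{\lambda,T}\| \le e^{-\lambda T}/\lambda$ and therefore $\mathcal{E}_2 = \lambda^2\|\rr(\lambda;\L^\tau) - \rr^\tau_{\lambda,T}\| \le \lambda e^{-\lambda T}$, which is precisely the asserted $\mathcal{E}_2 \preceq \lambda e^{-\lambda T}$. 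The main obstacle is not this final one-line calculus but the second and third steps: realizing that one must truncate the killed/stopped semigroup instead of the original one, and justifying the uniform contraction on the compact closure, since it is careless handling of the diffusion's out-of-domain excursions that would otherwise produce unbounded approximation errors.
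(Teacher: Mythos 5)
Your proof is correct, and while it shares the paper's overall skeleton---cancel the two $\lambda I$ terms, reduce $\mathcal{E}_2$ to $\lambda^2\|\rr^\tau_{\lambda,T}-\rr(\lambda;\L^\tau)\|$, and show the resolvent truncation error is of order $e^{-\lambda T}/\lambda$---you bound the tail by a genuinely different mechanism. The paper first rewrites both $\rr(\lambda;\L^\tau)$ and $\rr^\tau_{\lambda,T}$ via the Feynman--Kac killing representation of Proposition~3.1, then uses a dynamic-programming/change-of-variable identity to express the tail as $e^{-\lambda T}\rr(\lambda;\L^\tau)\,\eee^x\left[h(X_T)\exp\left\{-\int_0^T c(X_s)ds\right\}\right]$, and finally invokes the Hille--Yosida bound $\|\rr(\lambda;\L^\tau)\|\preceq 1/\lambda$. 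You instead estimate the tail $\int_T^\infty e^{-\lambda t}\,\eee^x[h(X_{t\cj\tau})]\,dt$ directly, using the pathwise observation that $X_{t\cj\tau}\in\overline{\X}$ for every $t$ and every sample path, so that $\|\K^\tau_t\|\leq 1$ on $C(\overline{\X})$ uniformly in $t$ even for dictionary functions unbounded on $\R^n$; a single explicit integration then yields $e^{-\lambda T}/\lambda$ with constant $1$. Your route is more elementary: it bypasses the killing functional and the Markov decomposition entirely, makes the constant explicit, and pinpoints why stopping (rather than discarding) out-of-domain excursions is what makes the tail vanish---your side remark that dropping the boundary contribution leaves a $T$-independent residual $\eee^x[h(X_\tau)e^{-\lambda\tau}]/\lambda$, hence a non-vanishing error after multiplying by $\lambda^2$, is exactly consistent with the paper's subsequent analysis of $\widetilde{\rr}^\tau_{\lambda,T}$. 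What the paper's dynamic-programming formulation buys in exchange is structural: the same identity ties the truncation to the killed semigroup and the resolvent machinery reused in the discussion of the conditional-expectation estimator, whereas your contraction argument is tailored to the sup-norm picture on the compact closure.
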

\begin{proof}
    For any $T\in(0, \infty)$ and $h\in C^2(\R^n)$, we have the following dynamic programming 
    \begin{small}
    \begin{equation}\label{E: dp}
    \begin{split}
       &\rr(\lambda;\L^\tau) h(x)\\
         = &\int_0^T e^{-\lambda t} \eee^x\left[h(X_t)\cdot\exp\left\{-\int_0^t c(X_s)ds\right\}\right]dt\\
        & + \int_T^\infty e^{-\lambda t} \eee^x\left[h(X_t)\cdot\exp\left\{-\int_0^t c(X_s)ds\right\}\right]dt\\
        = & \int_0^T e^{-\lambda t} \eee^x\left[h(X_t)\cdot\exp\left\{-\int_0^t c(X_s)ds\right\}\right]dt\\
        & +e^{-\lambda T}\rr(\lambda;\L^\tau)\eee^x\left[h(X_T)\cdot\exp\left\{-\int_0^T c(X_s)ds\right\}\right],
    \end{split}
\end{equation}
    \end{small}

\noindent where the last line is by a standard change-of-variable argument. Note that \begin{small}
    \begin{equation}\label{E: trunc}
    \begin{split}
        \rr_{\lambda,T}^\tau h(x)&=  
         \int_0^{T\cj\tau} e^{-\lambda t} \eee^x\left[h(X_{t})\right]dt
        \\
        & = \int_0^T e^{-\lambda t} \eee^x\left[h(X_{t\cj\tau})\right]dt\\
        & = \int_0^T e^{-\lambda t} \eee^x\left[h(X_t)\cdot\exp\left\{-\int_0^t c(X_s)ds\right\}\right]dt.
    \end{split}
\end{equation}
\end{small}

\noindent On the other hand, we have $\|\rr(\lambda;\L^\tau)\|\preceq\frac{1}{\lambda}$ \cite{pazy2012semigroups}. Combining this fact, \eqref{E: dp}, and \eqref{E: trunc}, we have $\sup_{x\in\X}|\rr_{\lambda,T}^\tau h(x)-\rr(\lambda;\L^\tau) h(x)|\leq \sup_{x\in\X}|e^{-\lambda T} \rr(\lambda;\L^\tau)\eee^x[h(X(T))]|\preceq \frac{e^{-\lambda T}}{\lambda}$. Hence, $\|\rr_{\lambda,T}^\tau-\rr(\lambda;\L^\tau)\|\preceq \frac{e^{-\lambda T}}{\lambda}$, and the conclusion follows immediately. 
\end{proof}

Given that the error $\mathcal{E}_2\ll\mathcal{E}_1$, defining
\begin{equation}\label{E: finite_horizon}
    \L^\tau_{\lambda, T}:=\lambda^2 \rr_{\lambda,T}^\tau  - \lambda I
\end{equation}
yields $\L^\tau_{\lambda, T}h(x)\rightarrow\L_\lambda h(x)$ for all $h\in C^2(\R^n)$ and all $x\in\X$. 

\begin{rem}
    Note that using $\L^\tau_{\lambda, T}$ as an approximator for $\L_\lambda$ provides a convergence guarantee for all $h\in C^2(\R^n)$ restricted to $\X$, thereby relaxing the boundedness requirement from \eqref{E: resolvent}. The key step involves evaluating the operator up to the random first exit time $\tau$. 
\Qed
\end{rem}

By formula \eqref{E: finite_horizon}, the observation up to the stopping time $\tau$ is essential. While one might consider filtering on $\set{t\geq \tau}$ and working exclusively with non-exiting sample paths, this approach is generally infeasible, even when the measure of non-exiting paths is large. To show this, we separate $\rr^\tau_{\lambda, T}$ by $\int_0^T e^{-\lambda t} \eee^x\left[h(X_t)\mathds{1}_{\{t<\tau\}}\right]dt+ \int_0^T e^{-\lambda t} \eee^x\left[h(X_\tau)\mathds{1}_{\{t\geq\tau\}}\right]dt$, where the first term is equal to $\int_0^{T} e^{-\lambda t} \eee^x\left[h(X_t)| t<\tau]\cdot \ppp^x[t<\tau\right]dt$. In the most favorable case where $\ppp^x[t<\tau]$ is uniformly close to $1$ for all $t<T$, we tend to employ $\widetilde{\rr}_{\lambda, T}^\tau:=\int_0^{T} e^{-\lambda t} \eee^x[h(X_t)| t<\tau]dt$
rather than $\rr^\tau_{\lambda, T}$ as the approximator for $\rr(\lambda;\L^\tau)$. In this case, the error term $\widetilde{\mathcal{E}} =\int_0^T e^{-\lambda t} \eee^x\left[h(X_\tau)\mathds{1}_{\{t\geq\tau\}}\right]dt$, and
\begin{small}
    \begin{equation*}
\begin{split}
|\widetilde{\mathcal{E}}h(x)|
          &\leq  \sup_{x\in\X}|h(x)|\int_0^Te^{-\lambda t}\ppp^x[\tau\leq t]dt \leq \|h\| \frac{1-e^{-\lambda T}}{\lambda},\;\forall x\in\X.
\end{split}
\end{equation*}
\end{small}

 \noindent Now let $p_x(t)$ be the density of $\ppp^x[\tau\leq t]$ and $M_x(T):=\sup_{t\in[0, T]}|p_x(t)|$. We have  
 \begin{small}
   \begin{equation*}
\begin{split}
        |\widetilde{\mathcal{E}}h(x)|
        & \geq \inf_{x\in\partial\X}|h(x)|\left[\underbrace{\frac{1-e^{-\lambda T}\ppp^x[\tau\leq T]}{\lambda}}_{:=\widetilde{\epsilon}_1(x)}-\underbrace{\frac{1}{\lambda}\int_0^Te^{-\lambda t}\rho_x(t)dt}_{:=\widetilde{\epsilon}_2(x)}\right],
\end{split}
\end{equation*}   
 \end{small}

 \noindent where $\widetilde{\eps}_1(x)\geq \frac{1-e^{-\lambda T}}{\lambda}$ and
 $\widetilde{\eps}_2(x)\leq \frac{M_x(T)}{\lambda}\int_0^Te^{-\lambda t}dt=M_x(T)\cdot \frac{1-e^{-\lambda T}}{\lambda}$. This implies 
 \begin{small}
      $\|\widetilde{\mathcal{E}}\|=\max\{\frac{1-e^{-\lambda T}}{\lambda},  |1-M_x(T)|(\frac{1-e^{-\lambda T}}{\lambda})\}.$ 
 \end{small}
However, in terms of Yosida approximation,  replacing $\rr_{\lambda, T}^\tau$ with the  $ \widetilde{\rr}_{\lambda, T}^\tau$, the error is of the scale $\|\lambda^2\widetilde{\mathcal{E}}\|$, which is not convergent as $\lambda\rightarrow\infty$ even when $M_x(T)\approx 0$ uniformly.

\begin{rem}
    The aforementioned limitation particularly holds for unknown systems with ambiguous structure of  $p_x(t)$, where our current estimation cannot be improved  without additional system information. However, for systems admitting moment exponential stability, particularly those possessing a Lyapunov function $V(x)=|x|^p$ ($p\geq 2$) such that $\L V(x)\leq -qV(x)$ for some $q>0$, we have $\eee^x[V(X_t)]\preceq V(x)e^{-qt}$. Then, $\ppp^x[\tau\leq t]  = \ppp^x\left[\sup_{0\leq s\leq t}|X_s|\geq R\right]  \leq \frac{\eee^x\left[\sup_{0\leq s\leq t}|X_s|^p\right]}{R^p}   \preceq \frac{e^{-qt}}{R^2}$, whence $\lambda^2\widetilde{\mathcal{E}}h(x)\preceq\frac{\lambda^2}{R^p}\int_0^Te^{-(\lambda+q) t}dt\preceq\frac{\lambda^2}{R^p}\frac{1-e^{-(\lambda+q)t}}{\lambda + q} \downarrow 0$ for sufficiently large $R$ or $p$. \Qed
\end{rem}

\subsection{Finite-Rank Approximation of Yosida Approximation}

Based on \eqref{E: R_trunc} and \eqref{E: finite_horizon}, it suffices to prove that for any fixed $T\in(0, \infty)$, the operator $\rr_{\lambda, T}^\tau$ admits a finite-rank representation. This property naturally supports a machine learning approach using a finite dictionary of observable test functions as basis functions.

Note that verifying the finite-rank representability of this operator follows the same procedure as in \cite[Section IV.B]{meng2024resolvent}. We therefore omit the detailed argument. In summary, the semigroup $\set{\K_t^\tau}_{t\geq 0}$ admits a compact approximation, as do the operator $\rr^\tau_{\lambda, T}$ and $\L_{\lambda, T}^\tau$. Leveraging this compactness property, we obtain, for any fixed $T>0$, there exists a finite-dimensional approximation of the form $\L_{\lambda, T}^{\tau, N}=\lambda^2\rr_{\lambda,T}^{\tau,N}-\lambda I$ such that $\mathcal{E}_3h:=\|\L_{\lambda, T}^{\tau, N}h-\L_{\lambda, T}^{\tau}h\|\downarrow 0$ as $N\rightarrow\infty$.

\subsection{Martingale Problem and Weak Convergence}\label{sec: martingale}

In contrast to deterministic systems, where generator approximations yield uniform convergence of solutions, we demonstrate how the above generator approximation for stochastic systems relates to their weak solutions. Due to page limitations, we provide only a sketch of the  analysis. 

Let $\widehat{\L}^\tau:=\L_{\lambda, T}^{\tau, N}$, and let $\widehat{X}^\tau$ denote the weak solution of the system $(\widehat{f}, \widehat{b})$ generated by $\widehat{\L}$, i.e., 
$d\widehat{X}^\tau(t)=\widehat{f}(\widehat{X}^\tau(t))+\widehat{b}(\widehat{X}^\tau(t))dW(t)$. 
 Let $\widehat{\ppp}_{\lambda, N}:=\ppp_{\lambda, T}^{\tau, N}$ be the probability law of $\widehat{X}^\tau$. Then, for any fixed $T$, as $\lambda\rightarrow\infty$ and  $N\rightarrow\infty$, we claim that   $\widehat{\ppp}_{\lambda, N}$ converges
weakly to the measure to $\ppp^\tau$,   which is the law of the stopped solution $X^\tau$ to \eqref{eq1}. This claim is equivalent to $\widehat{X}^\tau$ converge in probability law to $X^\tau$. 

The proof falls in standard procedures of the martingale problem. It is clear that $\{\widehat{\ppp}_{\lambda, N}\}$  is a tight family of probability measures given the compact state space $\X$. 
Let $h\in C^2(\R^n)$ be any test function. Then we can show that 
the process $\{\widehat{M}_{\lambda, N}(t)\}_{t\geq 0}$ is a local martingale, where
\begin{equation}
    \begin{split}
        \widehat{M}_{\lambda, N}(t):=& h(\widehat{X}(t\cj\tau))-h(\widehat{X}(0))\\-& \int_0^{t\cj \tau}\L h(\widehat{X}(s))ds+\widehat{E}_h(t\cj \tau),
    \end{split}
\end{equation}
$\widehat{E}_h(t)=\int_0^t(Lh-\widehat{L}h)(\widehat{X}(s))ds$,  and, for all $h\in C^2(\R^n)$,  
$\widehat{\eee}_{\lambda, N}[\sup_{t\in[0, \tau\cj T]}\|\widehat{E}_h(t)\|]\preceq \sum_{i=1}^3\mathcal{E}_i\downarrow 0$ as $\lambda, N\rightarrow\infty$. 
Therefore, for any $0\leq r_1<r_2<...<r_n\leq s<t$ and $\{\varphi_j;\;j=1,2,...,n\}\subset C(\X)$, we alternatively have 
 \begin{small}
     \begin{equation}
    \widehat{\eee}_{\lambda, N}\left[\{\widehat{M}_{\lambda, N}(t)-\widehat{M}_{\lambda, N}(s)\}\prod\limits_{j=1}^n\varphi_j(X(r_j))\right]=0
\end{equation}
 \end{small}
 
\noindent We also define the   process 
\begin{equation}\label{E: martingale-limit}
M(t)=h(\widehat{X}(t\cj\tau))-h(\widehat{X}(0))-\int_0^{t\cj\tau}\mathcal{L} h(\widehat{X}(s))ds.
\end{equation}
 By the tightness of  $\{\widehat{\ppp}_{\lambda, N}\}$  on $\X$, we can find a convergent (weakly) subsequence $\widehat{\ppp}^{n} \rightarrow\ppp^\tau$  as $n\rightarrow\infty$ (where $(\lambda_n, N_n)\rightarrow \infty$ along the subsequence). We can also justify that $\{\widehat{M}_{\lambda, N}(t)\}_{t\in[0,T]}$ is uniformly  integrable. Therefore, by 
 the convergence of $\widehat{\eee}_{\lambda, N}[\sup_{t\in[0, \tau\cj T]}\|\widehat{E}_h(t)\|]$, we have
\begin{small}
     \begin{equation}
\begin{split}
     &\eee^\tau\left[\{M(t)-M(t)\}\prod\limits_{j=1}^n\varphi_j(X(r_j))\right]\\
      =& \lim\limits_{n\rightarrow\infty} \widehat{\eee}^{n}\left[\{M(t)-M(s)\}\prod\limits_{j=1}^n\varphi_j(X(r_j))\right] \\
      =&\lim\limits_{n\rightarrow\infty} \widehat{\eee}^{n}\left[\widehat{M}_{\lambda, N}(t)-\widehat{M}_{\lambda, N}(s)\}\prod\limits_{j=1}^n\varphi_j(X({r_j}))\right] =0.
\end{split}
\end{equation}
\end{small}

\noindent This means every limit of $\widehat{\ppp}^{n}$ solves the martingale problem w.r.t. \eqref{E: martingale-limit}. 
Note that under    local Lipschitz continuity conditions, the Yamada-Watanabe theorem guarantees uniqueness of solutions to the martingale problem. Consequently, every limit point is unique and must coincide with $\ppp^\tau$, which proves the claim from the preceding paragraph.

\section{Data-driven Algorithm}\label{sec: data}
Similar to the learning of Koopman operators \cite{williams2015data, mauroy2019koopman, meng2023learning}, obtaining a fully discretized version  $L$ of  the bounded linear operator $\widehat{\L}^\tau:=\L_{\lambda, T}^{\tau, N}=\lambda^2\rr_{\lambda,T}^{\tau,N}-\lambda I$ based on the training data typically relies on the selection of a discrete dictionary $ Z_N(x) = [z_1(x), z_2(x), \dots, z_N(x)]^\trans$ of   test functions, where  \( z_n \in C^2(\X)\) for each $n$. We adopt the   EDMD framework and refer to this learning procedure, which employs the approximation $\widehat{\L}^\tau$, as resolvent-type EDMD (RT-EDMD).

Let $(\alpha_i, \phi_i)_{i=1}^{N}$ be the eigenvalues and eigenfunctions of $\L$. Let $(\beta_i, \varphi_i)_{i=1}^{N}$ be the eigenvalues and eigenvectors of $L$. Then, the following are expected to hold: 1) For any $h\in\operatorname{span}\{z_1, z_2, \cdots,z_{N}\}$ such that $h(x)=Z_N(x)\operatorname{\theta}$ for some column vector $\theta$, we have that $\L h(x)\approx \widehat{\L}^\tau h(x) \approx Z_N(x)(L\theta)$ for all $x\in\X$. 
2) $\alpha_i\approx\beta_i$ and $\phi_i(x)\approx Z_N(x)\varphi_i$ for each i. 

The learning procedure for such an \( L \) follows \cite{meng2024resolvent}, with two key data-related modifications for stochastic systems: 1) for each sample path \( \omega \), we approximate \( X(\tau, \omega) \) at \( \partial\X \); and 2) we use empirical averages over sample paths instead of the true expectation.

The procedure is detailed as follows. We first select $m$ initial states $\{x_i\}_{i=1}^m$. For each $X(0)=x_i$, we generate $J$ independent sample path $\set{\omega_j}_{j=1}^J$ from \eqref{eq1}, and denote the corresponding solutions by $X_{i,j}(t):=X(t, \omega_j)$ satisfying $X_{i,j}(0)=x_i$ for all $j$. We collect discrete-time observations (snapshots) at a rate of $\gamma$ Hz over the interval $[0, T]$  for all sample path, yielding the total number of snapshots as $\Gamma=\gamma T +1$. The observation instants are defined as $t_k=k T/\Gamma$ for $k\in\set{0, 1, \cdots, \Gamma}$.  Then,  each sample path is obtained through the Euler–Maruyama integration scheme at $\{t_k\}_{k=0}^\Gamma$, and represented as $X_{i,j} = \left[ x_i, X_{i,j}(t_1),\, X_{i,j}(t_2),\, \dots,\, X_{i,j}(t_\Gamma) \right]^\trans$ for each $j$. Similarly, we denote the stopped sample path as $X_{i,j}^\tau = \left[ x_i, X_{i,j}(t_1\cj\tau),\, X_{i,j}(t_2\cj\tau),\, \dots,\, X_{i,j}(t_\Gamma\cj\tau) \right]^\trans$. We can omit the index $j$ to emphasize the entire process. 

\begin{rem}
For each \( j \), we track whether \( X_{i,j}(t_k) \in \X \) to obtain the stopped path. The discrete exit time is recorded as \( k^* = \inf\{k : X_{i,j}(t_k) \notin \X\} \). If \( k^* < \Gamma \), we linearly interpolate between \( X_{i,j}(t_{k^*-1}) \) and \( X_{i,j}(t_{k^*}) \) to approximate the intersection with \( \partial\X \), yielding \( X_{i,j}(t_k \wedge \tau) \) for \( k \geq k^* \). See \cite[Algorithm 1]{meng2023learning} for details. \Qed
\end{rem}

Next, we summarize how the data is stacked for learning $L$. As a standard treatment, we stack the feature data   as  $ \mathbf{X} = [ Z_N(x_1), Z_N(x_2), \cdots, Z_N(x_m)]^T$. To construct the label stack at each time $t_k$, we proceed as follows.

The observation dictionary $\overline{Z}_N$ applied to the sample path   at time $t_k$ starting from initial point $x_i$ is represented as:
\begin{small}
    \begin{equation}
\overline{Z}_N(X_i^\tau(t_k)) = \frac{1}{J} \sum_{j=1}^{J} Z_n(X_{i,j}(t_k\cj\tau)) .
\end{equation}
\end{small}

\noindent The core step in constructing the label data stack involves numerically evaluating the resolvent integral. To do this, we construct an intermediate data matrix representing the integrands at discrete time $\set{t_k}$:
\begin{equation}
U_i = \lambda^2 e^{-\lambda t_k} \overline{Z}_N(X_i^\tau(t_k)).
\end{equation}

Based on the Yosida-like approximation $\lambda^2\rr_{\lambda,T}^{\tau,N}-\lambda I$, we construct the   matrix $\mathbf{Y}_\lambda (=\mathbf{Y}^{\Gamma}_{\lambda}) = \mathbf{I}^{\Gamma}_{\lambda} - \lambda \mathbf{X}$, where $\mathbf{I}^{\Gamma}_{\lambda}$ is the integration matrix approximating $\lambda^2\rr_{\lambda,T}^{\tau,N}$ constructed by
\begin{small}
    \begin{equation}
\mathbf{I}^{\Gamma}_{\lambda} = \begin{bmatrix} 
\mathcal{T}(U_1[:,1]) & \cdots & \mathcal{T}(U_1[:,N]) \\
\vdots & \ddots & \vdots \\
\mathcal{T}(U_m[:,1]) & \cdots & \mathcal{T}(U_m[:,N])
\end{bmatrix}
\end{equation}
\end{small}

\noindent with some linear operator \(\mathcal{T}(\cdot)\)  that approximates an integral using the trapezoidal rule. 

To retrieve $L$ from $\mathbf{X}$ and $\mathbf{Y}_\lambda$, the following procedures are identical to that in \cite{meng2024resolvent}, which are summarized below. 

\textit{A. Matrix generator estimation.}
Here we estimate the generator matrix through a least-squares method: $L = \arg\min_{A \in \mathbb{R}^{N \times N}} \|\mathbf{Y}^{\Gamma}_{\lambda} - \mathbf{X}A\|_F$
with the closed-form solution $L = (\mathbf{X}^\trans\mathbf{X})^{\dagger}\mathbf{X}^\trans\mathbf{Y}^{\Gamma}_{\lambda}$ \cite{williams2015data}.  Through this step, we obtain a finite-dimensional representation of the   generator of using the selected dictionary, which satisfies approximation properties  1) and  2) stated at the beginning of this section.

\textit{B. Modification under low sampling rate constraints.}
For low sampling rate scenarios, we introduce a modification based on the resolvent identity. We define the modified generator matrix: $L_{mod} = A^{\dagger}B$, 
where
\begin{small}
    \begin{equation}
A = \frac{\lambda-\mu}{\mu^2}\mathbf{I}^{\Gamma}_{\mu} + \mathbf{X}, \quad B = \frac{\lambda}{\mu}\mathbf{I}^{\Gamma}_{\mu} - \lambda \mathbf{X},
\end{equation}
\end{small}

\noindent $\mathbf{I}^{\Gamma}_{\mu}$ is computed with small $\Gamma$ using a small parameter $\mu$. This modification is necessary when $\Gamma$ is small and $\lambda$ must be large for the theoretical convergence of $\widehat{\L}$. However, to guarantee numerical precision in the integration, $\Gamma$ and $\lambda$ cannot be both large. Thus, without the modification, it is not feasible to resolve this conflict. This modification is achieved by applying the \textit{first resolvent identity}. For  details, we refer the reader to \cite[Appendix B]{meng2024resolvent}.

We provide convergence analysis for the data-driven approximation of $\L$ in the Appendix \ref{sec: app}, covering both scenarios using empirical averages and cases with small noise where a single sample path is used to replace the mean value.



\section{Numerical Examples} \label{sec:4}

In this section, we consider two classical theoretical systems with known spectra to validate the effectiveness and improved performance of the proposed method through comparisons with standard approaches including EDMD, and generator EDMD (gEDMD). 

Through learning the operator $\L$, we can achieve complete system identification. For   projection functions $\Psi_i(x)=x_{(i)}$ on the the $i$-th dimension of $x$, the second derivatives vanish, yielding $\mathcal{L}\Psi_i(x) = f_{(i)}(x)$
where $f_{(i)}(x)$ is the $i$-th component of the drift term. For identifying the diffusion coefficients, we utilize quadratic basis functions $\Phi_{i,j}(x)=x_{(i)}x_{(j)}$. The generator action gives $\mathcal{L}\Phi_{i,j}(x) = f_{(i)}(x)x_{(j)} + f_{(j)}(x)x_{(i)} + (b(x)b(x)^\trans)_{ij}$. We can then    isolate the diffusion term through  $b(x)b(x)^\trans)_{ij} = \mathcal{L}\Phi_{i,j}(x) - \left[f_{(i)}(x)x_{(j)} + f_{(j)}(x)x_{(i)}\right]$. 

We adopt the mean error between sample paths generated by the learned operator and true paths (under identical noise realizations) as our performance metric. If the analytical eigenvalues are known,  we can adopt the mean absolute error (MAE) as the evaluation metric for the accuracy of spectrum estimation. We define $\text{MAE} = \frac{1}{n} \sum_{i=1}^{n} |\alpha_i - \hat{\alpha}_i|$, 
where $\alpha_i$ denotes the $i$-th true eigenvalue, $\hat{\alpha}_i$ is the corresponding estimated eigenvalue, and $n$ is the number of eigenvalues.

The dictionary selection is highly flexible and can include polynomial bases, radial basis functions, 
or neural network features, enabling the algorithm to adapt to various nonlinear systems without prior knowledge. In non-degenerate elliptic diffusion settings, we select either polynomials of sufficient order 
that enable approximation closure of second-order differential operators at semi-global scales.

\subsection{Ornstein-Uhlenbeck Process}

We consider the classical one-dimensional Ornstein--Uhlenbeck (OU) process defined by the following SDE:
\begin{equation}
\mathrm{d}X(t) = \mu X(t)\,\mathrm{d}t + \sigma\,\mathrm{d}W(t),\;\;\mu < 0,\;\sigma > 0.
\label{eq:ou}
\end{equation}
We select $\mu = -0.5,\ \sigma =0.02$ for analytical investigation. Under the Koopman operator framework, this system exhibits an explicit spectral structure\cite{pavliotis2014stochastic}\cite{gaspard1995spectral}, with eigenvalues given by $\lambda_n = n\mu$, for $n\in\mathbb{N}_0$.
These eigenvalues form a degenerate spectrum, which consists of integer multiples of the drift coefficient \(\mu\), reflecting the intrinsic linear Gaussian structure of the OU process.

To validate the effectiveness of the proposed RT-EDMD algorithm, we first conduct a preliminary experiment under a sampling frequency of \SI{100}{Hz}. Specifically, we uniformly select $m = 200$ initial points from the interval $[-1, 1]$, and for each point, randomly generate $J = 100$ sample trajectories. The observable functions are chosen as monomials $z_n(x) = x^n$ for $n = 1, \ldots, N$, with $N = 5$. 
The RT-EDMD algorithm accurately estimates the eigenvalues of the Koopman operator under this setting. The estimation error, measured by the MAE, is $1.5 \times 10^{-3}$.

To evaluate robustness across sampling frequencies, we perform $100$ independent trials per sampling frequency setting. Fig.~\ref{fig：1d_freq} compares both spectral estimation methods across sampling frequencies, with the horizontal axis showing frequency (Hz) and the vertical axis displaying the MAE of Koopman generator spectrum estimates.
The boxplots in the figure illustrate the distribution of estimation errors over multiple independent trials for each method and frequency. The boxes indicate the interquartile range (IQR), with the horizontal line inside each box representing the median error. The whiskers extend to the most extreme data points within 1.5 times the IQR from the quartiles, and the dots denote outliers. In Fig.\ref{Fig3.sub1}, RT-EDMD consistently exhibits the lowest median error across all sampling frequencies. Even at small sampling intervals, it remains stable in capturing the system’s average dynamical behavior, effectively mitigating error amplification induced by diffusion perturbations. As the sampling frequency increases beyond 50~Hz, the estimation error plateaus with controlled variability, demonstrating strong generalization ability and robustness.

In contrast, EDMD shows a clear upward trend in error as the sampling frequency increases, which is closely related to its construction via the Koopman logarithm method (KLM)\cite{drmavc2021identification}\cite{klus2020data}. Specifically, EDMD approximates the finite-time Koopman operator using least squares and then applies a logarithmic transformation to obtain the continuous-time generator\cite{mauroy2019koopman} via $
\L = \frac{1}{t} \log(\K_t).
$
In SDE systems, higher sampling frequencies render the logarithmic transformation numerically sensitive. Meanwhile, the stochastic diffusion term becomes more significantly relative to the drift component—scaling as $\mathcal{O}(\sqrt{\Delta t})$ versus $\mathcal{O}(\Delta t)$, respectively. This leads to a declining signal-to-noise ratio as frequency increases, resulting in ill-conditioning of the Koopman operator estimate and ultimately causing a systematic increase in spectral estimation error.

As for gEDMD, which follows a finite difference method (FDM)\cite{bramburger2024auxiliary} \cite{nejati2021data} paradigm, approximates the generator directly using
$
\frac{\K_t - I}{t} \Rightarrow \L.
$
By incorporating a weighted inner product and leveraging derivative information of the observables, gEDMD partially alleviates high-frequency sampling issues. In Fig.\ref{Fig3.sub3}, it achieves relatively stable and low errors in the mid-frequency range ($20–100$Hz), although its performance still falls short of the proposed RT-EDMD. At higher frequencies (e.g., $200$Hz), error amplification re-emerges, indicating that the ability of gEDMD to suppress diffusion-induced disturbances remains limited.


\begin{figure*}[thpb]
    \centering
    \subfloat[]{
        \label{Fig3.sub1}
        \includegraphics[width=.25\textwidth]{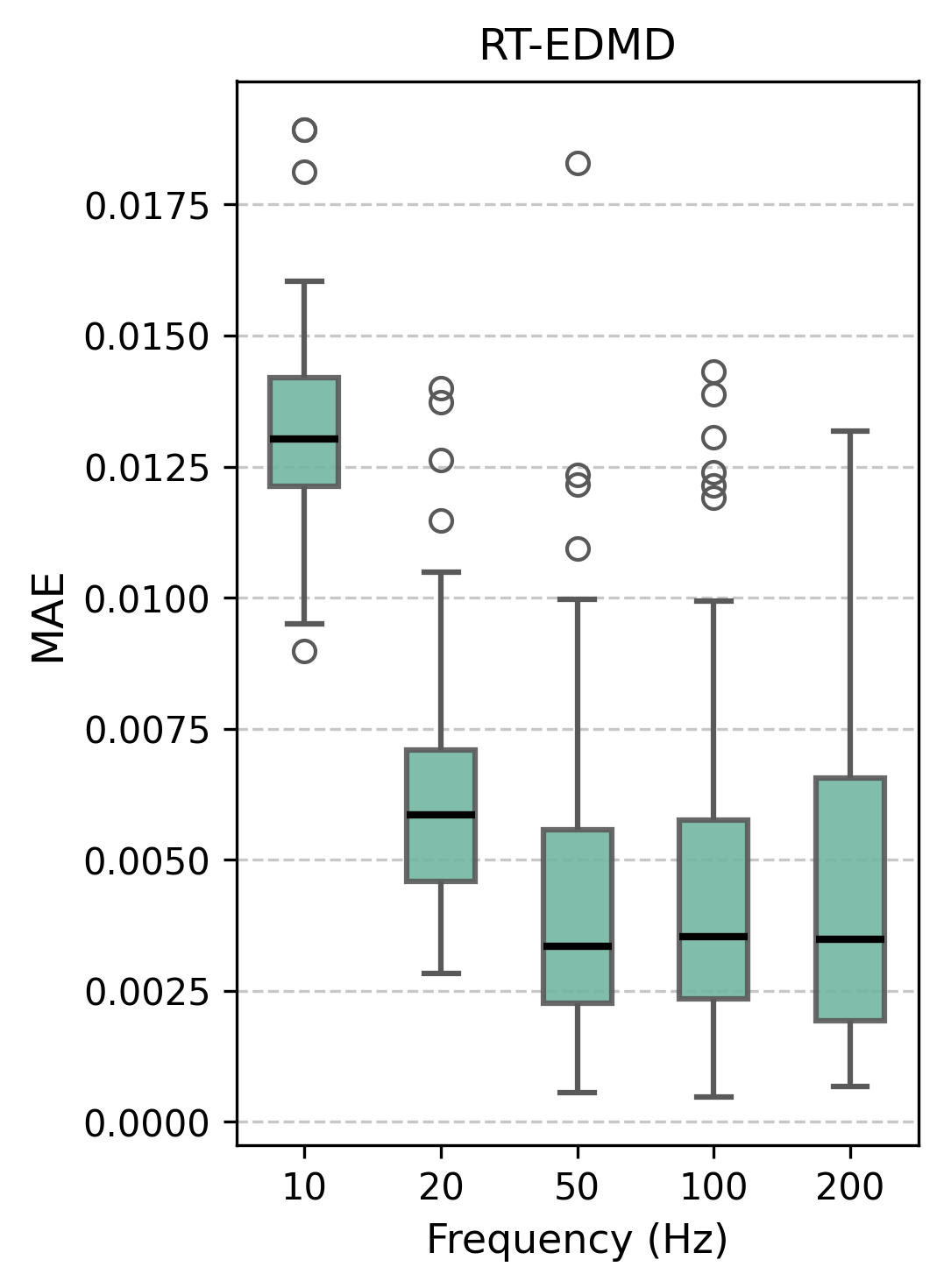}}
    \subfloat[]{
        \label{Fig3.sub2}
        \includegraphics[width=.25\textwidth]{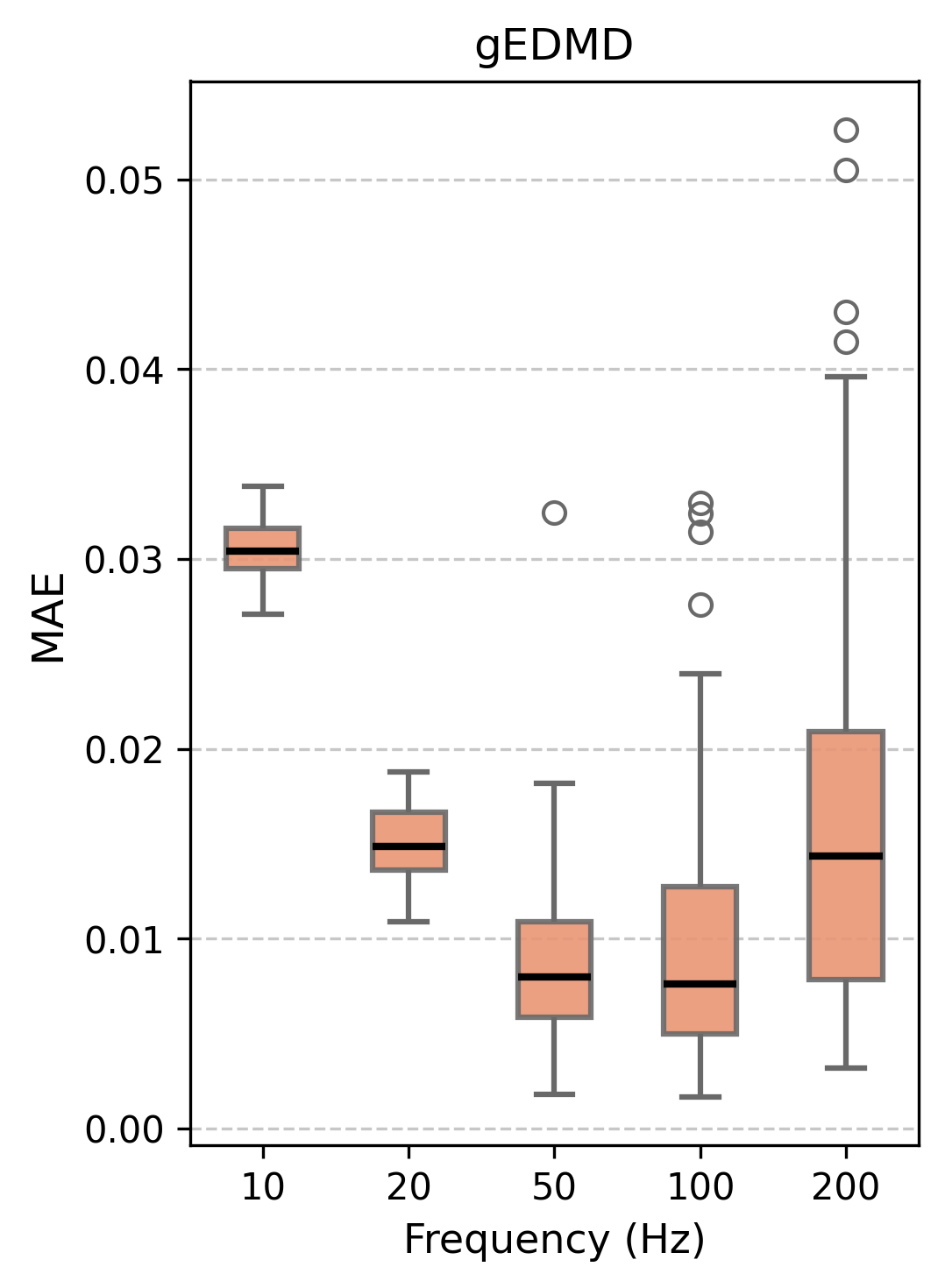}}
    \subfloat[]{
        \label{Fig3.sub3}
        \includegraphics[width=.25\textwidth]{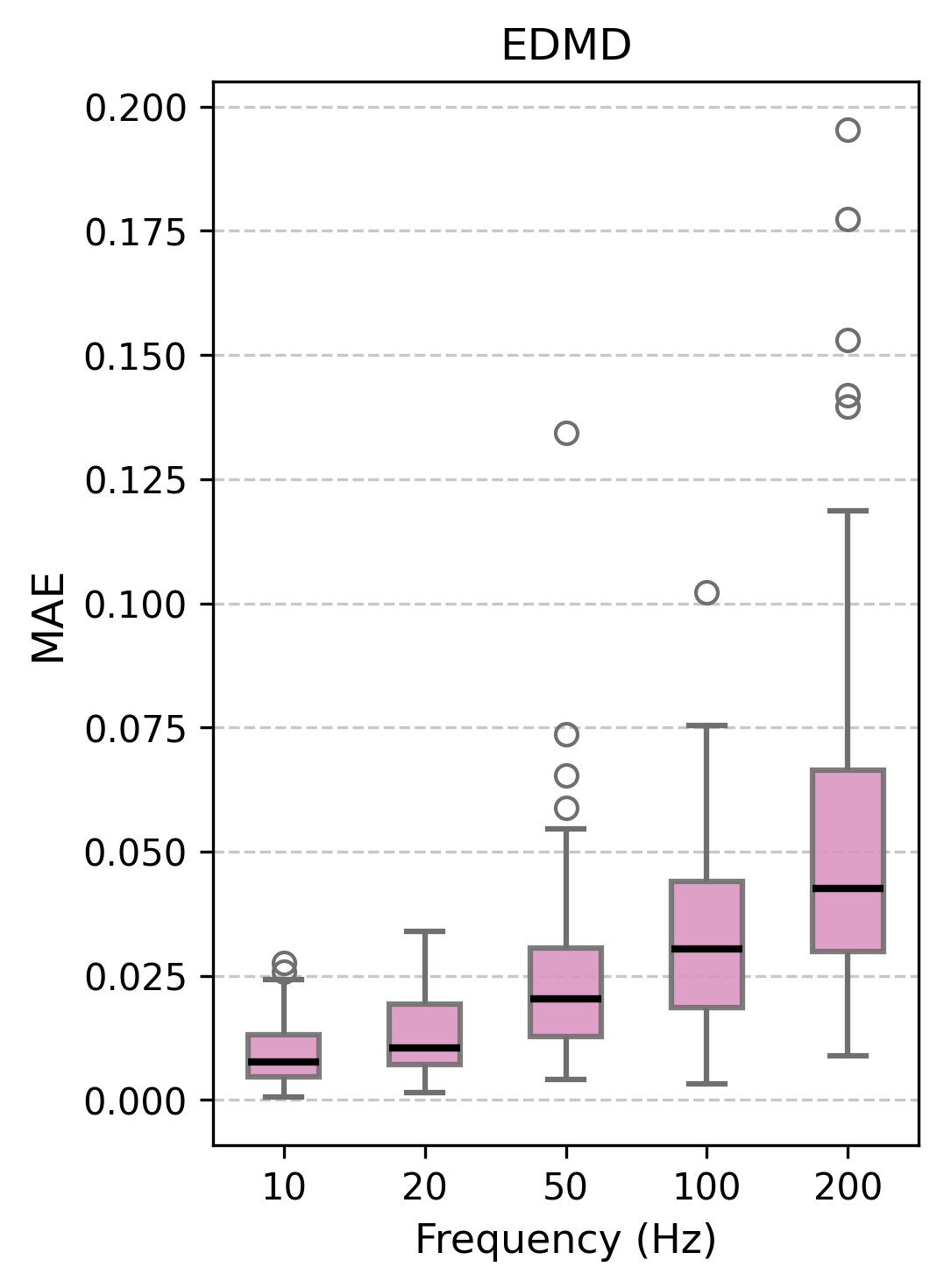}}	
    \caption{Spectral estimation error of the Ornstein–Uhlenbeck (OU) process at different frequencies}
    \label{fig：1d_freq}
\end{figure*}

Next, we validate the proposed RT-EDMD method’s capability in accurately capturing system dynamics under uncertainty. Under a sampling frequency of $100$ Hz, with $J = 100$ trajectories and $m = 200$ distinct initial conditions, we estimate the drift and diffusion coefficients of the system. As shown in Fig.~\ref{fig:re_path_1d}, the resulting parameter estimates are employed to reconstruct state trajectories from the OU process across various initial values, including both near-equilibrium and far-from-equilibrium cases. The upper subplots of Fig.~\ref{fig:re_path_1d} show that RT-EDMD (red dashed) closely approximates the true trajectories (blue solid), while EDMD and gEDMD show deviations and spurious oscillations, especially near the equilibrium. The lower subplots display the absolute errors on a logarithmic scale, where RT-EDMD consistently achieves errors between $10^{-3}$ to $10^{-5}$, demonstrating a substantial improvement over EDMD and gEDMD, whose error magnitudes typically lie between $10^{-2}$ and $10^{-1}$.  This comparative analysis highlights the superior precision in estimating system parameters and its robustness in capturing both short-term transitions and long-term mean-reverting behaviors of the OU process. These results underscore RT-EDMD's effectiveness in faithfully recovering SDE dynamics and delivering stable, low-error trajectory reconstructions across diverse initializations.

\begin{table}[htbp]
\centering
\caption{MAE (log scale) for Koopman spectrum error: single vs. multi-trajectory (100~Hz sampling frequency)}
\begin{tabular}{lcccc}
\toprule
\textbf{} & \textbf{RT-EDMD} & \textbf{gEDMD} & \textbf{EDMD} &  \\
\midrule
$J=1$      & $\mathbf{2.6 \times 10^{-3}}$ & $7.8 \times 10^{-3}$ & $2.5 \times 10^{-2}$ \\
$J=100$& $\mathbf{1.5 \times 10^{-3}}$ & $6.8 \times 10^{-3}$ & $2.3 \times 10^{-2}$  \\
\bottomrule
\end{tabular}
\label{table1}
\end{table}

In many practical applications, such as physical experiments, financial modelling, or online control, it is often difficult to collect many repeated samples from the same initial condition. To assess the proposed method under weak observability, we consider the limiting case of a single trajectory per initial point. The sampling frequency is fixed at 100~Hz, with all other settings unchanged. We compare the spectral estimation accuracy of various methods under both single- and multi-trajectory conditions. As shown in Table~\ref{table1}, the four methods yield notably different MAEs. RT-EDMD achieves the best performance in all cases, with an error of \(2.6 \times 10^{-3}\) in the single-trajectory case and \(1.5 \times 10^{-3}\) for multi-trajectory (\(J = 100\)).



\begin{figure*}
    \centering
    \includegraphics[width=1.05\textwidth]{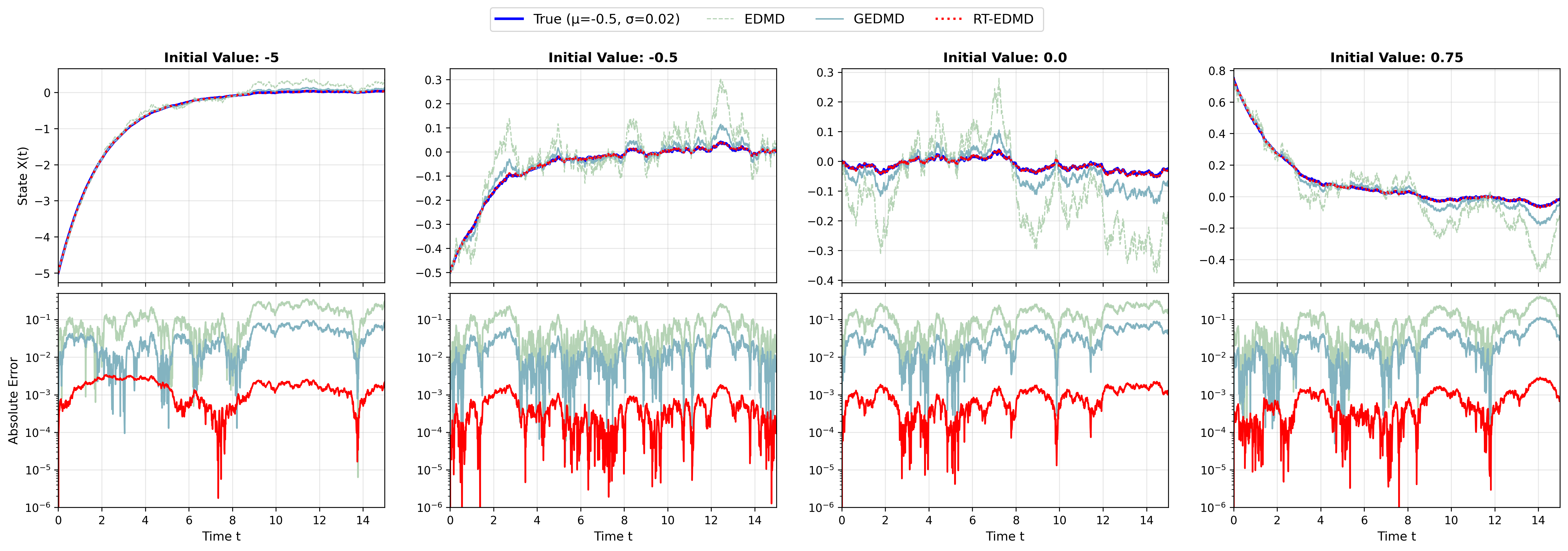}
    \caption{Reconstruction of sample paths from the OU process}
    \label{fig:re_path_1d}
\end{figure*}


\subsection {Noisy Lotka-Volterra (predator-prey) system}

Here we consider the classical Lotka-Volterra competition system, which is extensively applied in ecology and biology to model predator-prey interactions and competitive dynamics between two species. To account for realistic stochastic perturbations frequently encountered in ecological systems, multiplicative noise terms are introduced. Thus, the resulting SDE system under investigation can be expressed as follows:
{\small
    \begin{equation}
\begin{aligned}
dX_1(t) &= (a_1 - b_1 X_2(t) - c_1 X_1(t)) X_1(t)\,dt + \sigma_1 X_1(t)\,dW_t^{(1)}, \\
dX_2(t) &= (-a_2 + b_2 X_1(t) - c_2 X_2(t)) X_2(t)\,dt + \sigma_2 X_2(t)\,dW_t^{(2)}.
\end{aligned}
\end{equation}
}

Here, \( X_1, X_2 \) denote the population sizes of the two species. The model parameters \(a_1, b_1, c_1, a_2, b_2, c_2 > 0\) are determined by the ecological characteristics of the specific species considered. Additionally, \(\sigma_1, \sigma_2 \geq 0\) are nonnegative parameters that characterize the intensity of environmental fluctuations or other uncertainties affecting the dynamics of the populations.

The system parameters in this study are configured as follows: 
$a_1 = 1.0$, $b_1 = 0.5$, $c_1 = 0.01$; $a_2 = 0.75$, $b_2 = 0.25$, $c_2 = 0.01$, 
with noise intensity parameters $\sigma_1 = \sigma_2 = 0.05$. 
According to \cite{arato2003famous}, the principal eigenvalues of the system 
are given by $\lambda^{1,2} = -0.02509 \pm 0.86363i$. 



To validate the effectiveness and robustness of our method, we compare the principal spectrum estimation performance of different algorithms under varying sampling frequencies. In this case,we select 200 initial points randomly, each generating 100 independent trajectories for sampling. Table \ref{table2} presents the MAE comparison among EDMD, gEDMD, and RT-EDMDat different sampling frequencies. The results indicate that RT-EDMD consistently achieves the lowest MAE across all sampling rates. Although EDMD exhibits relatively stable performance when the sampling frequency changes, its overall errors remain higher than those of RT-EDMD and gEDMD. In summary, RT-EDMD demonstrates superior robustness and accuracy under both low and high sampling frequencies, effectively coping with noisy conditions in nonlinear systems. Hence, it shows significant advantages over the other two conventional methods in this experiment.
\begin{table}[htbp]
  \centering
  \caption{Mean Absolute Error (MAE) comparison of Noisy Lotka-Volterra system across sampling frequencies}
  \label{tab:mae_comparison}
  \begin{tabular}{lrrrr}
    \toprule
    Method & \multicolumn{4}{c}{Sampling Frequency (Hz)} \\
    \cmidrule(lr){2-5}
           & 10   & 20   & 50   & 100  \\
    \midrule
    RT-EDMD &  $\mathbf{0.0143}$ &  $\mathbf{0.0099}$ & $\mathbf{0.0131}$ & $\mathbf{0.0084}$ \\
    EDMD    & 0.0176 & 0.0173 & 0.0173 & 0.0183 \\
    gEDMD   & 0.0271 & 0.0126 & 0.0150 & 0.0121 \\
    \bottomrule
  \end{tabular}
  \vspace{0.5em}
  \footnotesize
  \label{table2}
\end{table}


\section{CONCLUSION}

In this paper, we have proposed a Koopman spectral analysis and system identification method for stochastic dynamical systems based on the Resolvent-Yosida approximation. By constructing finite-horizon and finite-rank Yosida approximations, the developed RT-EDMD algorithm robustly estimates the Koopman generator from observational data, enabling accurate extraction of spectral modes and reliable identification of system coefficients.

From a theoretical standpoint, rigorous mathematical guarantees regarding numerical stability and convergence of the proposed algorithm are established through analyses rooted in the martingale problem and weak convergence theory. Numerically, the effectiveness and robustness of RT-EDMD have been thoroughly demonstrated using representative examples, including the Ornstein–Uhlenbeck process and a noisy Lotka–Volterra predator-prey system. Experimental results indicate that, even under conditions of low sampling rates or single-trajectory observations, the RT-EDMD algorithm effectively captures essential spectral features, accurately identifies drift and diffusion coefficients, and consistently outperforms existing benchmark methods in terms of stability and performance.

\bibliographystyle{IEEEtran}
\bibliography{IEEEabrv,root}

\begin{thebibliography}{10}
\providecommand{\url}[1]{#1}
\csname url@samestyle\endcsname
\providecommand{\newblock}{\relax}
\providecommand{\bibinfo}[2]{#2}
\providecommand{\BIBentrySTDinterwordspacing}{\spaceskip=0pt\relax}
\providecommand{\BIBentryALTinterwordstretchfactor}{4}
\providecommand{\BIBentryALTinterwordspacing}{\spaceskip=\fontdimen2\font plus
\BIBentryALTinterwordstretchfactor\fontdimen3\font minus \fontdimen4\font\relax}
\providecommand{\BIBforeignlanguage}[2]{{%
\expandafter\ifx\csname l@#1\endcsname\relax
\typeout{** WARNING: IEEEtran.bst: No hyphenation pattern has been}%
\typeout{** loaded for the language `#1'. Using the pattern for}%
\typeout{** the default language instead.}%
\else
\language=\csname l@#1\endcsname
\fi
#2}}
\providecommand{\BIBdecl}{\relax}
\BIBdecl

\bibitem{chen2023stochastic}
N.~Chen, \emph{Stochastic Methods for Modeling and Predicting Complex Dynamical Systems}.\hskip 1em plus 0.5em minus 0.4em\relax Springer, 2023.

\bibitem{oksendal2013stochastic}
B.~Oksendal, \emph{Stochastic differential equations: an introduction with applications}.\hskip 1em plus 0.5em minus 0.4em\relax Springer Science \& Business Media, 2013.

\bibitem{brunton2016discovering}
S.~L. Brunton, J.~L. Proctor, and J.~N. Kutz, ``Discovering governing equations from data by sparse identification of nonlinear dynamical systems,'' \emph{Proceedings of the national academy of sciences}, vol. 113, no.~15, pp. 3932--3937, 2016.

\bibitem{schon2006maximum}
T.~B. Sch{\"o}n, A.~Wills, and B.~Ninness, ``Maximum likelihood nonlinear system estimation,'' \emph{IFAC Proceedings Volumes}, vol.~39, no.~1, pp. 1003--1008, 2006.

\bibitem{koopman1931hamiltonian}
B.~O. Koopman, ``Hamiltonian systems and transformation in hilbert space,'' \emph{Proceedings of the National Academy of Sciences}, vol.~17, no.~5, pp. 315--318, 1931.

\bibitem{koopman1932dynamical}
B.~O. Koopman and J.~v. Neumann, ``Dynamical systems of continuous spectra,'' \emph{Proceedings of the National Academy of Sciences}, vol.~18, no.~3, pp. 255--263, 1932.

\bibitem{budivsic2012applied}
M.~Budi{\v{s}}i{\'c}, R.~Mohr, and I.~Mezi{\'c}, ``Applied koopmanism,'' \emph{Chaos: An Interdisciplinary Journal of Nonlinear Science}, vol.~22, no.~4, 2012.

\bibitem{mezic2013analysis}
I.~Mezi{\'c}, ``Analysis of fluid flows via spectral properties of the koopman operator,'' \emph{Annual review of fluid mechanics}, vol.~45, no.~1, pp. 357--378, 2013.

\bibitem{mezic2005spectral}
------, ``Spectral properties of dynamical systems, model reduction and decompositions,'' \emph{Nonlinear Dynamics}, vol.~41, pp. 309--325, 2005.

\bibitem{schmid2010dynamic}
P.~J. Schmid, ``Dynamic mode decomposition of numerical and experimental data,'' \emph{Journal of fluid mechanics}, vol. 656, pp. 5--28, 2010.

\bibitem{williams2014kernel}
M.~O. Williams, C.~W. Rowley, and I.~G. Kevrekidis, ``A kernel-based approach to data-driven koopman spectral analysis,'' \emph{arXiv preprint arXiv:1411.2260}, 2014.

\bibitem{klus2020data}
S.~Klus, F.~N{\"u}ske, S.~Peitz, J.-H. Niemann, C.~Clementi, and C.~Sch{\"u}tte, ``Data-driven approximation of the koopman generator: Model reduction, system identification, and control,'' \emph{Physica D: Nonlinear Phenomena}, vol. 406, p. 132416, 2020.

\bibitem{meng2024resolvent}
Y.~Meng, R.~Zhou, M.~Ornik, and J.~Liu, ``Resolvent-type data-driven learning of generators for unknown continuous-time dynamical systems,'' \emph{arXiv preprint arXiv:2411.00923}, 2024.

\bibitem{zeng2024data}
Z.~Zeng, R.~Zhou, Y.~Meng, and J.~Liu, ``Data-driven optimal control of unknown nonlinear dynamical systems using the koopman operator,'' \emph{arXiv preprint arXiv:2412.01085}, 2024.

\bibitem{meng2024koopmanitsc}
Y.~Meng, H.~Li, M.~Ornik, and X.~Li, ``Koopman-based data-driven techniques for adaptive cruise control system identification,'' in \emph{27th IEEE International Conference on Intelligent Transportation Systems (ITSC)}.\hskip 1em plus 0.5em minus 0.4em\relax IEEE, 2024.

\bibitem{susuki2021koopman}
Y.~Susuki, A.~Mauroy, and I.~Mezic, ``Koopman resolvent: A laplace-domain analysis of nonlinear autonomous dynamical systems,'' \emph{SIAM Journal on Applied Dynamical Systems}, vol.~20, no.~4, pp. 2013--2036, 2021.

\bibitem{hollingsworth2008stochastic}
B.~J. Hollingsworth, \emph{Stochastic differential equations: A dynamical systems approach}.\hskip 1em plus 0.5em minus 0.4em\relax Auburn University, 2008.

\bibitem{pazy2012semigroups}
A.~Pazy, \emph{Semigroups of linear operators and applications to partial differential equations}.\hskip 1em plus 0.5em minus 0.4em\relax Springer Science \& Business Media, 2012, vol.~44.

\bibitem{karlin1981second}
S.~Karlin and H.~E. Taylor, \emph{A second course in stochastic processes}.\hskip 1em plus 0.5em minus 0.4em\relax Elsevier, 1981.

\bibitem{williams2015data}
M.~O. Williams, I.~G. Kevrekidis, and C.~W. Rowley, ``A data--driven approximation of the koopman operator: Extending dynamic mode decomposition,'' \emph{Journal of Nonlinear Science}, vol.~25, pp. 1307--1346, 2015.

\bibitem{mauroy2019koopman}
A.~Mauroy and J.~Goncalves, ``Koopman-based lifting techniques for nonlinear systems identification,'' \emph{IEEE Transactions on Automatic Control}, vol.~65, no.~6, pp. 2550--2565, 2019.

\bibitem{meng2023learning}
Y.~Meng, R.~Zhou, and J.~Liu, ``Learning regions of attraction in unknown dynamical systems via zubov-koopman lifting: Regularities and convergence,'' \emph{arXiv preprint arXiv:2311.15119}, 2023.

\bibitem{pavliotis2014stochastic}
G.~A. Pavliotis, ``Stochastic processes and applications,'' \emph{Texts in applied mathematics}, vol.~60, 2014.

\bibitem{gaspard1995spectral}
P.~Gaspard, G.~Nicolis, A.~Provata, and S.~Tasaki, ``Spectral signature of the pitchfork bifurcation: Liouville equation approach,'' \emph{Physical Review E}, vol.~51, no.~1, p.~74, 1995.

\bibitem{drmavc2021identification}
Z.~Drma{\v{c}}, I.~Mezi{\'c}, and R.~Mohr, ``Identification of nonlinear systems using the infinitesimal generator of the koopman semigroup—a numerical implementation of the mauroy--goncalves method,'' \emph{Mathematics}, vol.~9, no.~17, p. 2075, 2021.

\bibitem{bramburger2024auxiliary}
J.~J. Bramburger and G.~Fantuzzi, ``Auxiliary functions as koopman observables: Data-driven analysis of dynamical systems via polynomial optimization,'' \emph{Journal of Nonlinear Science}, vol.~34, no.~1, p.~8, 2024.

\bibitem{nejati2021data}
A.~Nejati, A.~Lavaei, S.~Soudjani, and M.~Zamani, ``Data-driven estimation of infinitesimal generators of stochastic systems,'' \emph{IFAC-PapersOnLine}, vol.~54, no.~5, pp. 277--282, 2021.

\bibitem{arato2003famous}
M.~Arat{\'o}, ``A famous nonlinear stochastic equation (lotka-volterra model with diffusion),'' \emph{Mathematical and Computer Modelling}, vol.~38, no. 7-9, pp. 709--726, 2003.

\end{thebibliography}

\newpage
\appendices
\section{Convergence Analysis in Section \ref{sec: data}} \label{sec: app}
\subsection{Convergence when using empirical averages to replace the true average}\label{sec: lln}
Note that when replacing the $\eee^{x_i}[h(X(t_k\cj\tau)]$ in the resolvent integral with its empirical average $\sum_{j=1}^{J} h(X_{i,j}(t_k\cj\tau))/J$ at each $t_k$ for each $i$, we must ensure convergence via the law of large numbers (LLN). The precision is measured under the corresponding probability \footnote{In \cite{nejati2021data}, the authors used $\pp$, but  in our context it is recast to be $\mathsf{P}^x$. The uniqueness of $\mathsf{P}^x$ is by Kolmogrov's extension theorem. 
}  $\mathsf{P}^{x_i}:=\otimes_{j=1}^\infty\ppp^{x_i}$. However, to fit the Yosida-type approximation and the martingale problem argument, we need the precision to be measured in $L_1$ sense in stead of in probability as in \cite{nejati2021data}. 

We need to leverage the convergence in the $L_1$ sense, i.e.,
\begin{equation} \label{E: l1}
    \mathsf{E}^{x_i}\left|\frac{1}{J}\sum_{j=1}^J h(X_{i,j}(t_k\cj\tau))-\eee^{x_i}[h(X(t_k\cj\tau))]\right|\rightarrow 0
\end{equation}
for each $i$ and $k$. 
This is indeed the case as an existing result, even though it is seldom mentioned. 

We prove the $L_1$ convergence of LLN as in \eqref{E: l1} using the backward martingale argument. 
\begin{deff}[Backward martingale]
A backward martingale is a stochastic process $\{X_{-n}\}_{n=1,2,\cdots}$ such that, for each $n$, $X_{-n}$ is $L_1$ integrable and $\mathcal{F}_{-n}$-measurable, and satisfies
\begin{equation}
    \mathsf{E}[X_{-n-1}\;|\;\mathcal{F}_{-n}]=X_{-n}.
\end{equation}
\end{deff}

\begin{thm}\textbf{(Backward martingale convergence theorem):}
For every backward maringale, as $n\rightarrow \infty$,
\begin{equation}
    X_{-n}\rightarrow \mathsf{E}[X_{-1}\;|\;\mathcal{F}_{-\infty}]\;\;\mathsf{P}\text{-a.s. and in}\; L_1. 
\end{equation}
\end{thm}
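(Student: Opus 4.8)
The plan is to recognize this as L\'evy's downward (reverse martingale) convergence theorem and to prove it in three stages: almost sure convergence, uniform integrability, and identification of the limit. The structural observation that drives everything is obtained by iterating the defining relation together with the tower property: every term of a backward martingale is the conditional expectation of the single integrable variable $X_{-1}$, namely $X_{-n}=\mathsf{E}[X_{-1}\mid\mathcal{F}_{-n}]$ for all $n\geq 1$, where $\mathcal{F}_{-1}\supseteq\mathcal{F}_{-2}\supseteq\cdots$ is decreasing. This representation is what makes both the integrability and the limit identification tractable.

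For almost sure convergence I would fix rationals $a<b$ and, for each $N$, reverse the finite window and read $(X_{-N},X_{-N+1},\dots,X_{-1})$ as an ordinary forward martingale adapted to the increasing filtration $\mathcal{F}_{-N}\subseteq\cdots\subseteq\mathcal{F}_{-1}$. Doob's upcrossing inequality then bounds the expected number of crossings of $[a,b]$ by $\mathsf{E}[(X_{-1}-a)^+]/(b-a)$, a bound that is independent of $N$. Letting $N\to\infty$ and invoking monotone convergence shows that the number of crossings of $[a,b]$ by the full backward sequence is finite almost surely, simultaneously for every rational pair; hence $X_{-\infty}:=\lim_{n\to\infty}X_{-n}$ exists almost surely in $\overline{\mathbb{R}}$. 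Since for each fixed $m$ the tail $(X_{-n})_{n\geq m}$ consists of $\mathcal{F}_{-m}$-measurable variables, the limit is $\mathcal{F}_{-m}$-measurable for every $m$, hence $\mathcal{F}_{-\infty}$-measurable with $\mathcal{F}_{-\infty}=\bigcap_m\mathcal{F}_{-m}$.

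Next I would upgrade this to $L_1$ convergence. Conditional Jensen's inequality gives $|X_{-n}|\leq\mathsf{E}[|X_{-1}|\mid\mathcal{F}_{-n}]$, and the family of conditional expectations of the fixed integrable variable $|X_{-1}|$ is uniformly integrable by absolute continuity of the integral; hence $\{X_{-n}\}$ is uniformly integrable, which in particular rules out $X_{-\infty}=\pm\infty$. Combining almost sure convergence with uniform integrability (Vitali's theorem) yields $X_{-n}\to X_{-\infty}$ in $L_1$. Finally, for the limit identification I would take any $A\in\mathcal{F}_{-\infty}\subseteq\mathcal{F}_{-n}$; the representation gives $\mathsf{E}[X_{-n}\mathds{1}_A]=\mathsf{E}[X_{-1}\mathds{1}_A]$, and passing to the limit with $L_1$ convergence yields $\mathsf{E}[X_{-\infty}\mathds{1}_A]=\mathsf{E}[X_{-1}\mathds{1}_A]$ for all $A\in\mathcal{F}_{-\infty}$. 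Since $X_{-\infty}$ is $\mathcal{F}_{-\infty}$-measurable, this is precisely the defining property of $\mathsf{E}[X_{-1}\mid\mathcal{F}_{-\infty}]$, so $X_{-\infty}=\mathsf{E}[X_{-1}\mid\mathcal{F}_{-\infty}]$, completing the proof.

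I expect the main obstacle to be the almost sure convergence step: the upcrossing estimate must be set up on the reversed finite windows (so that the decreasing filtration becomes an increasing one), and the uniform-in-$N$ bound has to be transported to the infinite backward sequence via monotone convergence before concluding convergence from finiteness of crossings over all rational intervals. By contrast, the uniform integrability and the limit-identification steps are routine once the structural identity $X_{-n}=\mathsf{E}[X_{-1}\mid\mathcal{F}_{-n}]$ and almost sure convergence are in hand.
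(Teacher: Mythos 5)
Your proof is correct, but note that the paper does not actually prove this theorem: it states it as a known classical result (L\'evy's downward theorem) and uses it as a black box, together with Kolmogorov's $0$-$1$ law, to establish the $L_1$ law of large numbers in \eqref{E: l1}. So there is no paper proof to compare against; what you have supplied is the standard textbook argument, and it is sound in all three stages: the reversal of each finite window $(X_{-N},\dots,X_{-1})$ into a forward martingale so that Doob's upcrossing inequality gives the $N$-independent bound $\mathsf{E}[(X_{-1}-a)^+]/(b-a)$, followed by monotone convergence over rational pairs $a<b$; uniform integrability from conditional Jensen applied to the representation $X_{-n}=\mathsf{E}[X_{-1}\mid\mathcal{F}_{-n}]$ (which, unlike the forward case, makes $L_1$-boundedness automatic rather than an extra hypothesis); and the limit identification by passing $\mathsf{E}[X_{-n}\mathds{1}_A]=\mathsf{E}[X_{-1}\mathds{1}_A]$ to the limit for $A\in\mathcal{F}_{-\infty}$. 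One point worth flagging: your structural identity silently corrects the paper's definition, whose displayed relation $\mathsf{E}[X_{-n-1}\mid\mathcal{F}_{-n}]=X_{-n}$ is degenerate as written, since with a decreasing filtration $X_{-n-1}$ is already $\mathcal{F}_{-n}$-measurable and the relation would force $X_{-n-1}=X_{-n}$; the intended property is exactly the one you use, and it is also the one the paper verifies in its application, $\mathsf{E}[\Sigma_{-1}\mid\mathcal{F}_{-J}]=\Sigma_{-J}$. Your proof is therefore not only correct but consistent with how the theorem is actually invoked in the paper's LLN argument.
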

\begin{thm}[Kolmogorov's $0$-$1$-law]
Let $\mathcal{F}_1,\mathcal{F}_2,\cdots$ be independent $\sigma$-fields and denote by $\mathcal{F}_{\infty}=\cap_{n=1}^\infty\sigma\left(\cup_{k=n}^\infty \mathcal{F}_k\right)$ the corresponding tail field. Then 
$$\mathsf{P}[A]\in\{0,1\},\;\;\forall A\in\mathcal{F}_\infty.$$
\end{thm}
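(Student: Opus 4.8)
The plan is to prove that every tail event $A\in\mathcal{F}_\infty$ is independent of itself, which immediately forces $\mathsf{P}[A]=\mathsf{P}[A]^2$ and hence $\mathsf{P}[A]\in\{0,1\}$. First I would introduce the finite joins $\mathcal{G}_n:=\sigma(\mathcal{F}_1,\ldots,\mathcal{F}_n)$ and the shifted tail $\sigma$-fields $\mathcal{T}_n:=\sigma\left(\cup_{k>n}\mathcal{F}_k\right)$. By the assumed independence of the family $\{\mathcal{F}_k\}$, the $\sigma$-fields $\mathcal{G}_n$ and $\mathcal{T}_n$ are independent for each fixed $n$, since they are generated by disjoint blocks of the independent family. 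Because $\mathcal{F}_\infty\subseteq\sigma\left(\cup_{k>n}\mathcal{F}_k\right)=\mathcal{T}_n$ for every $n$, it follows that $A\in\mathcal{T}_n$, and therefore $A$ is independent of $\mathcal{G}_n$ for every $n$.

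Next I would upgrade this to independence of $A$ from the full $\sigma$-field $\mathcal{G}_\infty:=\sigma\left(\cup_k\mathcal{F}_k\right)$. The union $\mathcal{A}:=\cup_n\mathcal{G}_n$ is an algebra, hence a $\pi$-system, because the $\mathcal{G}_n$ are increasing in $n$; and by the previous step $A$ is independent of every member of $\mathcal{A}$. The collection of events $B$ satisfying $\mathsf{P}[A\cap B]=\mathsf{P}[A]\,\mathsf{P}[B]$ forms a $\lambda$-system (Dynkin system) that contains the $\pi$-system $\mathcal{A}$, so by the $\pi$--$\lambda$ theorem it contains $\sigma(\mathcal{A})=\mathcal{G}_\infty$. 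Thus $A$ is independent of the entire $\sigma$-field $\mathcal{G}_\infty$.

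Finally, I would observe that $\mathcal{F}_\infty\subseteq\mathcal{G}_\infty$, so in particular the tail event itself satisfies $A\in\mathcal{G}_\infty$. Applying the independence just established with the choice $B=A$ yields $\mathsf{P}[A]=\mathsf{P}[A\cap A]=\mathsf{P}[A]\,\mathsf{P}[A]$, whence $\mathsf{P}[A]\bigl(1-\mathsf{P}[A]\bigr)=0$ and therefore $\mathsf{P}[A]\in\{0,1\}$, as claimed.

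The main obstacle is the extension step: independence of $A$ from each finite join $\mathcal{G}_n$ does not by itself give independence from the generated $\sigma$-field $\mathcal{G}_\infty$, and the $\pi$--$\lambda$ theorem is precisely what licenses passing from the generating algebra to the $\sigma$-field it generates. Care is also needed to verify that $\cup_n\mathcal{G}_n$ is genuinely closed under finite unions, intersections, and complements — a consequence of the monotonicity of the $\mathcal{G}_n$ — so that it legitimately qualifies as a $\pi$-system on which the Dynkin argument can be launched.
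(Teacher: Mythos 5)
Your proof is correct, but note that the paper does not actually prove this theorem: it states Kolmogorov's $0$-$1$ law (together with the backward martingale convergence theorem) as a classical black-box ingredient, invoked only at the end of the proof of the $L_1$ law of large numbers in \eqref{E: l1} to conclude that $\mathsf{E}^{x_i}[Y_1\,|\,\mathcal{F}_{-\infty}]$ is a.s.\ constant. So there is no paper proof to compare against; what you have written is the standard Dynkin-system proof of self-independence, and it is sound: $\mathcal{F}_\infty\subseteq\mathcal{T}_n$ for every $n$, independence of $A$ from each $\mathcal{G}_n$, extension to $\mathcal{G}_\infty=\sigma\bigl(\cup_n\mathcal{G}_n\bigr)$ via the $\pi$--$\lambda$ theorem (your verification that $\cup_n\mathcal{G}_n$ is an algebra by monotonicity, and that $\{B:\mathsf{P}[A\cap B]=\mathsf{P}[A]\mathsf{P}[B]\}$ is a $\lambda$-system, are exactly the points that need checking), and finally $B=A$ gives $\mathsf{P}[A]=\mathsf{P}[A]^2$. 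The one step you assert rather than prove is the opening claim that $\mathcal{G}_n$ and $\mathcal{T}_n$ are independent "since they are generated by disjoint blocks": independence of the family $\{\mathcal{F}_k\}$ is a statement about finite subcollections, and passing to the generated $\sigma$-fields of two disjoint blocks (the grouping lemma) itself requires a $\pi$--$\lambda$ argument of precisely the kind you deploy later — e.g., take as $\pi$-systems the finite intersections $\cap_{k\le n}B_k$ with $B_k\in\mathcal{F}_k$ and $\cap_{n<k\le m}C_k$ with $C_k\in\mathcal{F}_k$, and apply Dynkin's theorem twice. Either cite that lemma explicitly or fold it into your existing argument; with that patch the proof is complete and is the proof any standard reference (e.g., Durrett or Kallenberg) gives for the result the paper takes for granted.
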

\noindent\textbf{Proof of \eqref{E: l1}}: Let $Y_j=h(X_{i,j}^\tau(t))$ for $j\in\{1,2,\cdots\}$.
Then $\{Y_j\}$ is $L_1$ integrable and i.i.d. w.r.t. $\mathsf{P}^{x_i}$. Let $S_J=\sum_{j=1}^JY_j$ be the finite sum and let $\Sigma_{-J}=\frac{S_J}{J}$ be the average.  Then 
the $\sigma$-field $\mathcal{F}_{-J}=\sigma\{S_J,S_{J+1},\cdots\}$ is a decreasing filtration. Due to the independence of $\{Y_j\}$, we have
\begin{equation}\label{E: backward}
\begin{split}
       \mathsf{E}^{x_i}[\Sigma_{-1}\;|\;\mathcal{F}_{-J}]& =\mathsf{E}^{x_i}[Y_1\;|\;S_J,S_{J+1},\cdots]\\
       &=\mathsf{E}^{x_i}[Y_1\;|\;S_J,Y_{J+1},Y_{J+2},\cdots]\\
       & =\mathsf{E}^{x_i}[Y_1\;|\;S_J]. \end{split}
\end{equation}
 Notice that  $\mathsf{E}^{x_i}[Y_j\;|\;S_J]=\mathsf{E}^{x_i}[Y_\iota\;|\;S_J]$  by symmetry for $\iota,j\in\{1,\cdots,J\}$,
  then 
 \begin{equation}
     J\mathsf{E}^{x_i}[Y_\iota\;|\;S_J]=\sum_{\iota=1}^J\mathsf{E}^{x_i}[Y_\iota\;|\;S_J]=\mathsf{E}^{x_i}[S_J\;|\;S_J]=S_J.
 \end{equation}
Combining the above, we have
$$\mathsf{E}^{x_i}[\Sigma_{-1}\;|\;\mathcal{F}_{-J}]=\frac{S_J}{J}=\Sigma_{-J}, $$
which verifies that $\{\Sigma_{-J}\}$ is a backward martingale. By the backward martingale convergence theorem, we immediately have
$$\frac{S_J}{J}\rightarrow \mathsf{E}^{x_i}[Y_1\;|\;\mathcal{F}_{-\infty}],\;\;\mathsf{P}^{x_i}\text{-a.s. and in}\; L_1. $$
By Kolmogorov's $0$-$1$ law, we have that all $A$ in the tail field $ \mathcal{F}_{-\infty}$ have probability either $0$ or $1$, which in turn implies that the conditional expectation $\mathsf{E}^{x_i}[Y_1\;|\;\mathcal{F}_{-\infty}]$ must be a constant (by the definition of conditional expectation) and should be equal to the average $\mathsf{E}^x[Y_1]=\eee^{x_i}[Y_1]$. \Qed

Combining \eqref{E: l1} with the methodology from Section \ref{sec: data}, under high precision of numerical integration and data fitting, we can immediately obtain the approximation $\L h(x)\approx \widehat{\L}^\tau h(x) \approx Z_N(x)(L\theta)$ as stated.  The remainder of the convergence analysis follows the same procedure as in Section~\ref{sec: martingale}, but now under the measure $\mathsf{P}$ (which has broader measurability than $\ppp$). We therefore omit the repetition.

\subsection{Convergence   for single-sample-path estimation in the small-noise scenario}
Now we suppose that the diffusion term is small. In this case, we examine how sample paths converge to the mean trajectory. Let $\overline{(\cdot)}:=\eee(\cdot)$ for simplicity. Then, for \eqref{eq1}, we have $X(t)-\overline{X(t)}=\int_0^t [f(X(s))-\overline{f(X(s))}]ds+\int_0^t b(X(s))dW(s)$. Consequently, for any fixed $t$, 
\begin{equation}
    \begin{split}
        &\eee|X^\tau(t)-\overline{X^\tau(t)}|^2\\\preceq & \eee \int_0^t\left| f(X^\tau(s))-\overline{f(X^\tau(s))}\right|^2ds+\eee\left|\int_0^tb(X^\tau(t))dW(t)\right|^2\\
        \preceq & \|bb^\trans(X^\tau(t))\|\downarrow 0
    \end{split}
\end{equation}
where the first term is bounded by $\preceq \|bb^\trans(X^\tau(t))\|$ by LLN, and the second term is bounded by $\preceq \|bb^\trans(X^\tau(t))\|$ by It\^{o}'s  Isometry. By the same reasoning as in Section \ref{sec: lln}, we can immediately obtain the approximation $\L h(x)\approx \widehat{\L}^\tau h(x) \approx Z_N(x)(L\theta)$ as stated.  The remainder of the convergence analysis follows the same martingale problem procedure as in Section~\ref{sec: martingale}.
\end{document}